\documentclass[journal]{IEEEtran}
\usepackage{amsmath}
\usepackage{graphicx}
\usepackage{epsfig}
\usepackage{latexsym}
\usepackage{amsfonts}
\usepackage{amssymb}
\usepackage{psfrag}
\usepackage{cite}
\usepackage{subfigure}
\hyphenation{op-tical net-works semi-conduc-tor}

\begin{document}

\title{Optimal Power and Range Adaptation for Green Broadcasting
\thanks{S. Luo and T. J. Lim are with the Department of
Electrical and Computer Engineering, National University of
Singapore (e-mail:\{shixin.luo, eleltj\}@nus.edu.sg).} \thanks{R.
Zhang is with the Department of Electrical and Computer Engineering,
National University of Singapore (e-mail:elezhang@nus.edu.sg). He is
also with the Institute for Infocomm Research, A*STAR, Singapore.}}

\author{Shixin Luo,~\IEEEmembership{Student Member,~IEEE}, Rui Zhang,~\IEEEmembership{Member,~IEEE}, and Teng Joon Lim,~\IEEEmembership{Senior Member,~IEEE}}

\maketitle
\begin{abstract}
Improving energy efficiency is key to network providers maintaining profit levels and an acceptable carbon footprint in the face of rapidly increasing data traffic in cellular networks in the coming years. The energy-saving concept studied in this paper is the adaptation of a base station's (BS's) transmit power levels and coverage area according to channel conditions and traffic load. Cell coverage is usually pre-designed based on the estimated static (e.g. peak) traffic load. However, traffic load in cellular networks exhibits significant fluctuations in both space and time, which can be exploited, through cell range adaptation, for energy saving. In this paper, we design short- and long-term BS power control (STPC and LTPC respectively) policies for the OFDMA-based downlink of a single-cell system, where bandwidth is dynamically and equally shared among a random number of mobile users (MUs). STPC is a function of all MUs' channel gains that maintains the required user-level quality of service (QoS), while LTPC (including BS on-off control) is a function of traffic density that minimizes the long-term energy consumption at the BS under a minimum throughput constraint. We first develop a power scaling law that relates the (short-term) average transmit power at BS with the given cell range and MU density. Based on this result, we derive the optimal (long-term) transmit adaptation policy by considering a joint range adaptation and LTPC problem. By identifying the fact that energy saving at BS essentially comes from two major energy saving mechanisms (ESMs), i.e. range adaptation and BS on-off power control, we propose low-complexity suboptimal schemes with various combinations of the two ESMs to investigate their impacts on system energy consumption. It is shown that when the network throughput is low, BS on-off power control is the most effective ESM, while when the network throughput is higher, range adaptation becomes more effective.
\end{abstract}

\begin{keywords}
Cellular network, cell zooming, power control, energy-efficient communication, broadcast channel, OFDMA.
\end{keywords}

\IEEEpeerreviewmaketitle
\setlength{\baselineskip}{1.0\baselineskip}
\newtheorem{definition}{\underline{Definition}}[section]
\newtheorem{fact}{Fact}
\newtheorem{assumption}{Assumption}
\newtheorem{theorem}{\underline{Theorem}}[section]
\newtheorem{lemma}{\underline{Lemma}}[section]
\newtheorem{corollary}{Corollary}
\newtheorem{proposition}{\underline{Proposition}}[section]
\newtheorem{example}{\underline{Example}}[section]
\newtheorem{remark}{\underline{Remark}}[section]
\newtheorem{algorithm}{\underline{Algorithm}}[section]
\newcommand{\mv}[1]{\mbox{\boldmath{$ #1 $}}}

\section{Introduction}\label{sec:introduction}
Mobile data traffic is anticipated to grow many-fold between 2010 and 2020, inducing many technical challenges such as how to improve energy efficiency in order to limit growth in energy consumption to a factor smaller than that of data traffic growth. The drive to make cellular networks more ``green'' starts with base stations (BSs), since they make up a large proportion of the total energy consumed in any cellular network \cite{Fettweis09}.

Cell planning, i.e. placement of BSs and coverage area of each one, is usually based on estimated static (e.g. peak) traffic load. Current research in cellular network planning mainly focus on the practical deployment algorithm design. For example, in \cite{Hanly02}, the authors used stochastic geometry to analyze the optimal macro/micro BS density for energy-efficient heterogeneous cellular networks with QoS constraints. The energy efficiency of heterogeneous networks and the effects of cell size on cell energy efficiency were investigated in \cite{Wang10} by introducing a new concept called area energy efficiency. However, traffic load in cellular networks fluctuates substantially over both space and time due to mobility and traffic burstiness. Therefore, there will always be some cells under light load, and others under heavy load, which suggests that static cell planning based on peak load will not be optimal. Load balancing schemes have thus been proposed in both academia and industry \cite{Son09,Hanly95,Ritte03}, which react to load variations across time and cells by adaptively re-allocating users to cells. In \cite{Son09}, a network-wide utility maximization problem was considered to jointly optimize partial frequency reuse and load-balancing in a multicell network. In \cite{Hanly95,Ritte03}, the authors proposed the ``cell breathing'' technique, which shrinks (or expands) the coverage of congested (or under-loaded) cells by reducing (or raising) the power level, so that the load becomes more balanced.

BSs consume a significant amount of energy (up to $60\%$ of the total network energy consumption \cite{3GppTR}) due to their operational units, e.g., processing circuits, air conditioner, besides radio transmission. Therefore, selectively letting some BSs be switched off according to traffic load can yield substantial energy saving. There have been a few BS on-off switching schemes introduced in the literature. For example, energy saving as a function of the daily traffic pattern, i.e the traffic intensity as a function of time, was derived in \cite{Marsan09}, where it is shown through simulations that energy saving on the order of $25-30\%$ is possible. Centralized and distributed BS reconfiguration algorithms were proposed in \cite{Brunner10}, with simulations showing that the centralized algorithm outperforms the distributed one at the cost of increased complexity and overhead. In \cite{Jardosh09}, the authors considered a wireless local area network (WLAN) consisting of a high density of access points (APs). The resource on-demand (RoD) strategy was introduced to power on or off WLAN APs dynamically, based on the volume and location of user demand.

When some BSs are switched off, their coverage areas need to be served by the remaining active BSs in the network. Such a self-organized network (SON) has been introduced in 3GPP LTE \cite{SON09}. A similar but more flexible method called ``Cell Zooming'' was proposed in \cite{Niu10}, which adaptively adjusts the cell size according to traffic load, user requirements, and channel conditions, in order to balance the traffic load in the network and thereby reduce energy consumption. Energy-efficient cellular network planning with consideration of BSs' ability of cell zooming, which is characterized as cell zooming ratio, was investigated in \cite{Niu11}. However, to the best of our knowledge, a scheme that adapts both coverage range and transmit power (including the possibility of turning off the BS) to minimize the total energy consumed has not been studied in the literature, even under the simple one-cell setup. {This motivates our work, which studies the extreme case of one single-cell system in order to obtain useful insights that could be applied in a general multi-cell environment.}

In this paper, we consider the downlink transmission in an orthogonal frequency-division multiple access (OFDMA) based cellular network. Unlike traditional cellular networks using fixed time and/or bandwidth allocation, we consider that the available time-frequency transmission blocks are dynamically and equally allocated to a random number of active mobile users (MUs). Moreover, the BS is assumed to have two levels of power control: short-term power control (STPC) and long-term power control (LTPC), which correspond to the inherent difference in the time scales of the MUs' average channel gain variations (in e.g. seconds) and traffic density variations (in e.g. hours). STPC sets the transmit power based on each MU's distance from the BS to meet each MU's outage probability requirement over fading, while LTPC (including BS on-off control)\footnote{Note that turning off BS is considered in the LTPC of this paper. Since we focus on the extreme case of a one-cell system in this paper, we assume that any uncovered spatial holes left by the single cell of our interest are to be filled by the surrounding active cells, which cause no interference to the considered cell by a proper frequency assignment scheme.} is implemented according to traffic density variations such that the long-term energy consumption at the BS is minimized under a certain system-level throughput constraint. Under the above broadcast channel setup, a new power scaling law, which relates the (short-term) average transmit power of BS with the given cell coverage range and traffic density, is derived for the case of homogeneous Poisson point process (HPPP) distributed MU locations. Based on the derived power scaling law, we determine the optimal long-term cell adaptation policy by considering a joint range adaptation and LTPC problem. Since it is challenging to obtain closed-form expressions for the optimal policy, approximate solutions are derived in closed-form under a high spectrum efficiency (HSE) assumption, which provide further insights into the design of cellular networks of the future in which both power and spectral efficiency are important.

By identifying that the energy saving at BS essentially comes from two major energy saving mechanisms (ESMs), i.e. range adaptation and BS on-off power control, we propose low-complexity suboptimal schemes with various combinations of the two ESMs to further investigate their effects on the system energy consumption. By numerical simulations, it is shown that significant energy saving can be achieved in OFDMA broadcast channels with the optimal cell adaptation policy. Furthermore, it is revealed that when the network throughput requirement is modest, the simple BS on-off control is nearly optimal for cell adaptation in terms of energy saving; however, when higher network throughput is required, a finer-grained strategy of range adaptation is needed. These results provide useful guidelines for designing energy-efficient cellular networks via cell power and/or range adaptations.

The rest of this paper is organized as follows. Section \ref{sec:system model} introduces the system model, and derives the BS power scaling law under STPC. Section \ref{sec:optimal power and range adaptation} studies the joint cell range adaptation and LTPC problem. Section \ref{sec:suboptimal} presents various low-complexity suboptimal schemes. Section \ref{sec:performance comparison} compares the performance of optimal and suboptimal schemes through numerical examples. Finally, Section \ref{sec:conclusion} concludes the paper.

\section{System Model}\label{sec:system model}
We consider an OFDMA downlink in one particular cell with bandwidth $W$ Hz. It is assumed that the BS can adaptively adjust its cell coverage according to MU density and power budget through admission control. In this section, we first introduce a spatial model of cellular traffic based on MUs distributed according to a HPPP. Then, we elaborate on the proposed bandwidth sharing scheme for the OFDMA-based broadcast channel. Finally, we describe the STPC, based on which a power scaling law relating the (short-term) average transmit power at a BS given a pair of coverage range and MU density is derived.

\subsection{Traffic Model}
The two-dimensional Poisson process has been used to model the locations of MUs in a cellular network. In this paper, we assume that MUs form a HPPP $\Phi_m$ of density $\lambda_m$ in the Euclidean plane. Considering that every MU within the cell coverage requests connection (voice service or data application) randomly and independently with probability $q$, then according to the Marking Theorem \cite{Kingman93}, the active MUs (that need to communicate with a BS) form another HPPP $\Phi$ of density $\lambda$,\footnote{BS is assumed to support all MUs, within coverage, who request service.} where $\lambda = q\lambda_m$. Since we are interested in active MUs, we refer to active MUs simply as MUs in the rest of this paper. The MU density $\lambda$ is assumed to be a non-negative random variable with finite support, i.e. $0 \leq \lambda \leq \lambda_{\text{max}}$, with $f_{\lambda}(\cdot)$ and $F_{\lambda}(\cdot)$ denoting its probability density function (PDF) and cumulative distribution function (CDF), respectively. Let $N \triangleq |\Phi(B)|$ represent the total number of MUs within a cell, denoted by $B$. Then $N$ is a Poisson random variable with mean $\mu_N \triangleq \lambda\pi R^2$, where $R$ denotes the cell radius, and probability mass function (PMF)
\begin{align}\label{eq:poisson distribution}
\mbox{Pr}[N = n] = \frac{\mu_N^n}{n!}e^{-\mu_N}, ~~ n = 0,1, \ldots.
\end{align}

\subsection{Equal Bandwidth Sharing}

\begin{figure}
\centering
\epsfxsize=0.7\linewidth
    \includegraphics[width=9cm]{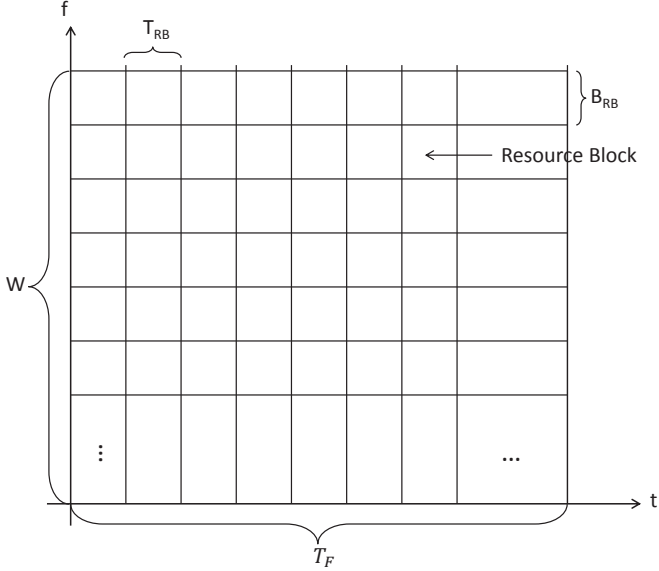}\\
  \caption{Equal bandwidth sharing (EBS)} \label{fig:DBS}
\end{figure}

Practically, dynamic bandwidth sharing (DBS) can be realized by users' time-sharing the available sub-carriers in OFDMA. To be more specific, the available time-frequency resource is divided into Resource Blocks (RBs) over both time and frequency, which are allocated among MUs such that each MU can be ideally assigned an effective bandwidth with arbitrary value from $0$ to $W$ Hz. Note that in general, DBS allocates the available RBs dynamically among MUs in order to optimize certain system-level utility (e.g. throughput) based on the number of MUs, their channels from the BS, and their QoS requirements. For the purpose of exposition, in this paper we assume a simplified equal bandwidth sharing (EBS) scheme among MUs, i.e., the effective bandwidth allocated to MU $i$, $i = 1,2 ..., N$, is $W/N$ Hz.

An illustration of the EBS within a scheduled transmission frame $T_F$ is shown in Fig. \ref{fig:DBS}. The available time-frequency resource is divided into RBs with dimensions $T_{\text{RB}}$ and $B_{\text{RB}}$ over time and frequency, respectively. $T_{\text{RB}}$ and $B_{\text{RB}}$ are assumed to be much smaller than the channel coherence time, $T_c$, and the channel coherence bandwidth, $B_c$, respectively; thus a flat-fading channel can be assumed in each RB. Let $N_F = \frac{W}{B_{\text{RB}}}$ and $N_T = \frac{T_F}{T_{\text{RB}}}$ be the number of frequency slices and time slices, respectively, within a transmission frame. The total number of available RBs within one frame can be computed as $U = N_FN_T$, which is assumed to be large enough such that each MU can be assigned a continuous effective bandwidth $\frac{U_i}{U}W$, where $U_i$ is the number of RBs allocated to MU $i$. For example, $4$ RB's are allocated to MU $i$ as shown in Fig. \ref{fig:DBS}. The total bandwidth allocated to MU $i$ is therefore $\frac{4}{N_F}W$, over a period of $N_T$ channel uses, where a channel use corresponds to $T_{\text{RB}}$ seconds. Therefore, MU $i$ is given $\frac{4W}{N_FN_T} = \frac{4W}{U}$ Hertz of bandwidth per channel use, which also implies that the BS is serving $N = \frac{U}{4}$ active MUs by EBS.


With EBS, the achievable rate for MU $i$, given received signal power $S_i$, is
\begin{align}\label{eq:achievable rate}
V_i = \frac{W}{N}\log_2\left(1 + \frac{NS_i}{\Gamma N_0W}\right)
\end{align}
where $\Gamma$ accounts for the gap from the channel capacity due to a practical coding and modulation scheme, and $N_0$ is the power spectral density of the additive white Gaussian noise (AWGN).

Suppose that channel coding is performed over $L$ non-contiguous RBs allocated to a MU (c.f. Fig. \ref{fig:DBS} with $L = 4$). Then from (\ref{eq:achievable rate}), the average achievable rate of MU $i$ over $L \geq 1$ RBs is given by \cite{Tse05}
\begin{align}\label{eq:average achievable rate}
\bar{V}_i = \frac{1}{L}\sum_{l=1}^{L}\frac{W}{N}\log_2\left(1 + \frac{NS_{i, l}}{\Gamma N_0W}\right)
\end{align}
where $S_{i, l}$ is the received signal power at the $l$th allocated RB, $l = 1, ..., L$, and $S_{i, l}$'s are independent over $l$ due to independent channel fading if the $L$ RBs allocated to a MU are sufficiently far apart in time and/or frequency.

\subsection{Power Scaling Law}\label{sec:power scaling law}
We assume a simplified channel model consisting of distance-dependent pathloss with path loss exponent $\alpha > 2$ and an additional random term accounting for short-term fading of the channel from the BS to each MU. With the assumed channel model, the received signal power for the $l$th RB of MU $i$ is given by
\begin{align}\label{eq:path loss model}
S_{i,l} & = \left\{ \begin{array}{cl} \displaystyle
P_ih_{i,l}K\left(\frac{r_i}{r_0}\right)^{-\alpha} & \mbox{if } r_i \geq r_0 \\
P_ih_{i,l}K & \mbox{otherwise} \end{array}\right.
\end{align}
where $r_i$ is a random variable representing the distance between MU $i$ and BS, $K$ is a constant equal to the pathloss at a reference distance $r_0$, $h_{i,l}$ is an exponential random variable with unit mean accounting for Rayleigh fading with $h_{i, l}$'s being independent and identically distributed (i.i.d) over both $i$ and $l$, and $P_i$ is the transmit power for MU $i$, which is assumed to be identical for all $l$'s since the realizations of $h_{i, l}$'s are not assumed to be known at BS. It is easy to verify that $S_{i,l}$'s are i.i.d over $l$ as previously assumed.

To characterize the required minimum transmit power for MU $i$, $P_i$, outage performance is considered as the user-level QoS constraint. An outage event occurs when the link between MU $i$ and BS cannot support a desired target rate $\bar{v}$ bits/sec, which is assumed to be equal for all MUs for simplicity. According to (\ref{eq:average achievable rate}), the outage probability for MU $i$ is given by
\begin{align}\label{eq:outage probability}
\text{P}_{\text{out}}^i = \mbox{Pr}\left\{\sum_{l=1}^{L}\frac{W}{N}\log_2\left(1 + \frac{NS_{i,l}}{\Gamma N_0W}\right) < L\bar{v}\right\}.
\end{align}
Since outage typically occurs when none of the $L$ parallel channels can support the average rate $\bar{v}$ \cite{Tse05}, (\ref{eq:outage probability}) can be properly approximated as
\begin{align}\label{eq:approximate outage probability}
\text{P}_{\text{out}}^i & \approx \prod_{l = 1}^{L}\mbox{Pr}\left\{\frac{W}{N}\log_2\left(1 + \frac{NS_{i,l}}{\Gamma N_0W}\right) < \bar{v}\right\} \nonumber \\
& = \left(\mbox{Pr}\left\{ S_{i,1} < \frac{\Gamma N_0W}{N}(2^{\frac{N\bar{v}}{W}}-1)\right\}\right)^{L}.
\end{align}
Given $r_i$, $S_{i,1}$ is an exponential random variable with mean $\bar{S}_{i,1}$, which is given by
\begin{align}\label{eq:mean receive power}
\bar{S}_{i,1} & = \left\{ \begin{array}{cl} \displaystyle
P_iK\left(\frac{r_i}{r_0}\right)^{-\alpha} & \mbox{if } r_i \geq r_0 \\
P_iK & \mbox{otherwise}. \end{array}\right.
\end{align}
Thus, the outage probability for MU $i$ given distance from BS $r_i$ can be simplified as
\begin{align}\label{eq:simplified outage probability}
\text{P}_{\text{out}}^i(r_i) \approx \left[1 - \exp\left(-\frac{\Gamma N_0W}{N\bar{S}_{i,1}}(2^{\frac{N\bar{v}}{W}}-1)\right)\right]^{L}.
\end{align}

Let $\bar{\text{P}}_{\text{out}}$ denote the maximum allowable outage probability for all MUs. Then the inequality
\begin{align}\label{eq:quality of service demaind}
\text{P}_{\text{out}}^i \leq \bar{\text{P}}_{\text{out}}
\end{align}
needs to be maintained for all $i$'s. From (\ref{eq:mean receive power}), (\ref{eq:simplified outage probability}) and (\ref{eq:quality of service demaind}), we can obtain $P_i$ given $r_i$ and $N$ for the BS's STPC as\footnote{Note that several other quantities such as $\bar{V}_i$ and $\text{P}^i_{\text{out}}$ are also dependent on $N$, but to simplify notation, we did not explicitly display this dependency when defining them. However, the manipulations of $P_i$ to follow do involve $N$ and therefore we write $P_i$ as a function of $r_i$ and $N$ below.}
\begin{align}\label{eq:single user transmission power}
P_i(r_i, N) & = \left\{ \begin{array}{cl} \displaystyle
\frac{\Gamma N_0W}{KC_1}\cdot\frac{2^{NC_2} - 1}{N}\cdot\frac{r_i^{\alpha}}{r_0^{\alpha}} & \mbox{if } r_i \geq r_0 \\
\frac{\Gamma N_0W}{KC_1}\cdot\frac{2^{NC_2} - 1}{N} & \mbox{otherwise} \end{array}\right.
\end{align}
where $C_1 = -\ln(1 - \bar{\text{P}}_{\text{out}}^{1/L})$ and $C_2 = \frac{\bar{v}}{W}$. With $P_i(r_i, N)$, the total transmit power $P_t$ at the BS can be expressed as
\begin{align}\label{eq:general total power}
P_t = \sum_{i = 1}^{N} P_i(r_i, N).
\end{align}
Note that $P_t$ is a random variable due to the randomness in the number of MUs, $N$, and their random distances from the BS, $r_i$'s.

In this paper, we assume that the BS can perform a slow LTPC based on the MU density variation, in addition to the more rapid STPC, for the purpose of minimizing the long-term energy consumption (more details will be given in Section \ref{sec:optimal power and range adaptation}). Considering the fluctuations of $P_t$ given coverage range $R$ and MU density $\lambda$, according to (\ref{eq:general total power}), a power scaling law that averages the random effects of the number of MUs and their locations is desired to facilitate the LTPC design to be studied in Section \ref{sec:optimal power and range adaptation}. This motivates us to find the (short-term) average transmit power $\bar{P}_t \triangleq \mathbb{E}[P_t]$ at BS for a given pair of $R$ and $\lambda$, where the expectation is taken over $N$ and $r_i$'s.

The approach for finding $\bar{P}_t$ is to apply the law of iterated expectations, i.e.,
\begin{align}\label{eq:conditonal expectation formula}
\bar{P}_t = \mathbb{E}_{N}\left[\mathbb{E}[P_t|N]\right]
\end{align}
where the inner expectation is taken over the random user locations given $N = n$ number of MUs, and the outer expectation is performed over the Poisson distributed $N$. This method works because $\mathbb{E}[P_t|N=n]$ in (\ref{eq:conditonal expectation formula}) can be obtained using the following property of conditioned HPPP \cite{Kingman93}:
\begin{align}\label{eq:conditioned expected transmision power formular}
\mathbb{E}[P_t|N=n] & = \mathbb{E}\left[\sum_{i = 1}^{n} P_i(r_i, n)\right] \nonumber \\
& = n\mathbb{E}[P_i(r_i, n)]
\end{align}
where $P_i(r_i, n)$ represents the required transmit power from the BS to any MU $i$ with distance $r_i$ given that $N = n$ number of MUs equally share the total bandwidth $W$ by EBS. It can be further verified that given $N = n$, MU $i$ is uniformly distributed within a circular coverage area with radius $R$. Thus, $\mathbb{E}[P_i(r_i, n)]$ is identical for all $i$'s, and computed as
\begin{align}\label{eq:pdf of location}
\mathbb{E}[P_i(r_i, n)] = \int_0^R P_i(r_i, n)f(r_i) dr_i
\end{align}
where $f(r_i) = \frac{2r_i}{R^2}$, $0 \leq r_i \leq R$, is the PDF of $r_i$.

Using (\ref{eq:pdf of location}) and averaging $\mathbb{E}[P_t|N=n]$ in (\ref{eq:conditioned expected transmision power formular}) over the Poisson distribution of $N$, we obtain a closed-form expression for $\bar{P}_t$, which is given in the following theorem.
\begin{theorem}\label{theorem:0}
Consider an OFDMA-based broadcast channel, where the available bandwidth $W$ Hz is equally shared among all MUs with STPC to support a target rate $\bar{v}$ bits/sec with outage constraint $\bar{\text{P}}_{\text{out}}$. Suppose that the channels from the BS to all MUs experience independent Rayleigh fading, then the transmit power at the BS averaged over MU population $N$ and BS-MU distance $r_i$, given a coverage range $R$ and a MU intensity $\lambda$, is approximated by
\begin{align}\label{eq:power radium relationship}
\bar{P}_t(R, \lambda) = D_1R^{\alpha}\left(2^{D_2\pi\lambda R^2} - 1\right)
\end{align}
where $D_1 = \frac{2\Gamma N_0W}{K(-\ln(1 - \bar{\text{P}}_{\text{out}}^{1/L}))(\alpha + 2)r_0^{\alpha}}$ and $D_2 = \frac{\bar{v}}{W}$ is the per-user spectrum efficiency in bps/Hz.
\end{theorem}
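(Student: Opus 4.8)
The plan is to evaluate $\bar P_t$ through the law of iterated expectations in (\ref{eq:conditonal expectation formula}), disposing of the inner (spatial) average first and the outer (Poisson) average second. For the inner step I would fix $N=n$ and compute $\mathbb{E}[P_i(r_i,n)]$ by integrating the piecewise expression (\ref{eq:single user transmission power}) against the uniform radial density $f(r_i)=2r_i/R^2$ of (\ref{eq:pdf of location}), which is justified by the conditioned-HPPP fact that each MU is uniform in the disk of radius $R$. Writing the common prefactor as $A_n \triangleq \frac{\Gamma N_0 W}{KC_1}\cdot\frac{2^{nC_2}-1}{n}$, the integral splits into a near-field part on $[0,r_0]$, contributing $A_n(r_0/R)^2$, and a far-field part on $[r_0,R]$, which (using $\int_{r_0}^R r_i^{\alpha+1}\,dr_i = \frac{R^{\alpha+2}-r_0^{\alpha+2}}{\alpha+2}$) contributes, to leading order, $A_n\frac{2R^\alpha}{(\alpha+2)r_0^\alpha}$. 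For the physically relevant regime $r_0\ll R$ the near-field term is negligible, so $\mathbb{E}[P_i(r_i,n)] \approx D_1 R^\alpha\frac{2^{nC_2}-1}{n}$, with $D_1$ collecting the constants and $C_1=-\ln(1-\bar{\text{P}}_{\text{out}}^{1/L})$ exactly as in the statement.

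Next I would assemble the conditional expectation. By (\ref{eq:conditioned expected transmision power formular}), $\mathbb{E}[P_t|N=n]=n\,\mathbb{E}[P_i(r_i,n)]$, and the factor $n$ cancels the $1/n$ above, leaving the clean conditional law
\begin{align}
\mathbb{E}[P_t|N=n] \approx D_1 R^\alpha\left(2^{nC_2}-1\right). \nonumber
\end{align}
The key simplification is that the per-user $1/n$ bandwidth-sharing penalty is exactly undone by summing over the $n$ users, so all $n$-dependence is confined to the exponent $2^{nC_2}$.

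It then remains to average over the Poisson law of $N$ with mean $\mu_N=\lambda\pi R^2$. Treating $2^{nC_2}=(2^{C_2})^n$ and invoking the Poisson probability generating function $\mathbb{E}[z^N]=e^{\mu_N(z-1)}$ with $z=2^{C_2}$, I obtain
\begin{align}
\bar P_t \approx D_1 R^\alpha\left(e^{\mu_N(2^{C_2}-1)}-1\right). \nonumber
\end{align}
The final step, which I expect to be the main obstacle, is reconciling this exact exponential-of-$(2^{C_2}-1)$ form with the power-of-two form $2^{D_2\pi\lambda R^2}$ claimed in the theorem. These agree only after the approximation $2^{C_2}\approx 1+C_2\ln 2$, which is accurate precisely when the single-user target rate is small relative to the total bandwidth, $C_2=\bar v/W\ll 1$; note this is distinct from, and milder than, requiring the per-user system spectral efficiency $nC_2$ to be small, since $nC_2$ need not be small. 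Under this approximation $\mu_N(2^{C_2}-1)\approx \mu_N C_2\ln 2=\ln\!\big(2^{\mu_N C_2}\big)$, so $e^{\mu_N(2^{C_2}-1)}\approx 2^{\mu_N C_2}$, and substituting $\mu_N=\lambda\pi R^2$ together with $D_2=C_2$ yields $\bar P_t(R,\lambda)\approx D_1 R^\alpha\big(2^{D_2\pi\lambda R^2}-1\big)$, as desired. For completeness I would remark that, since $2^{C_2}-1>C_2\ln 2$ by convexity, this last step slightly underestimates the exact average, so the theorem's closed form is in fact a lower bound on the true $\bar P_t$.
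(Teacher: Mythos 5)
Your proposal is correct and takes essentially the same route as the paper's proof in Appendix A: the uniform-disk inner expectation, the exact cancellation of the $1/n$ bandwidth-sharing factor against the $n$ users, the Poisson averaging (which the paper carries out as an explicit series rather than via the probability generating function, and with the residual term $\frac{\alpha r_0^{\alpha+2}}{2R^2}$ dropped after rather than before this step), and the final approximation $2^{\bar{v}/W}-1\approx \frac{\bar{v}}{W}\ln 2$, which the paper justifies by writing $\bar{v}/W = r_{se}/\bar{N}$ with $\bar{N}$ on the order of hundreds---the same smallness condition $C_2 \ll 1$ you invoke. Your closing observation that, by convexity, the closed form underestimates the model's exact average (so the theorem's expression is a lower bound on $\bar{P}_t$) is a correct refinement not stated in the paper.
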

\begin{proof}
See Appendix \ref{appendix:proof theorem 0}.
\end{proof}

\begin{remark}\label{remark:0}
Theorem \ref{theorem:0} relates the average BS transmit power $\bar{P}_t$ with cell range $R$ and MU density $\lambda$. Given $R$, $\bar{P}_t$ grows exponentially with increasing $\lambda$ due to the reduced bandwidth equally allocated among (on average) $\mu_N = \lambda\pi R^2$ MUs. On the other hand, given $\lambda$, besides the exponential increment in $\bar{P}_t$ with respect to $R^2$ due to the similar effect of per-user bandwidth reduction, there exists an extra polynomial term $R^{\alpha}$ in $\bar{P}_t$, due to the increased power consumption needed to compensate for more significant path loss with growing $R$. Since $\bar{P}_t$ is a strictly increasing function of both $R$ and $\lambda$, to maintain a constant $\bar{P}_t$, $R$ needs to be reduced when $\lambda$ increases and vice versa. Theorem \ref{theorem:0} therefore quantifies the relationship among BS transmit power, cell size and MU density, which enables the design of the (long-term) cell adaptation strategies introduced in the rest of this paper.
\end{remark}

\section{Optimal Power and Range Adaptation}\label{sec:optimal power and range adaptation}
Power and range adaptation is the combined task of cell range adaptation and BS LTPC (including on-off control), which are both assumed to be performed on the time scale of MU density variation. Since MU's density variation is much slower as compared with MU's channel variation (which is taken care of by STPC studied in Section \ref{sec:power scaling law}), LTPC is implemented over $\bar{P}_t$ given in (\ref{eq:power radium relationship}) for the purpose of minimizing the BS's long-term energy consumption.

In this section, we first present a practical energy consumption model for BS by considering both transmission and non-transmission related power consumptions. Based on the presented energy consumption model, we study a joint cell range adaptation and LTPC problem to minimize the long-term power consumption at BS under a system-level throughput constraint.

\subsection{Energy Consumption Model at BS}\label{sec:energy consumption model}
The energy consumption of a BS in general includes two parts: transmit power $\bar{P}_{t}$ and a constant power $P_{c}$ accounting for all non-transmission related power consumption of e.g. electronic hardware and air conditioning. When the BS does not need to support any user, it can switch to a ``sleep'' mode \cite{Blume10}, by turning off the power amplifier to reduce energy consumption. We note that the two cases of $R > 0$ and $R = 0$ correspond to ``on'' and ``off (sleep)'' modes of BS, respectively. A power consumption model for the BS is thus given by
\begin{align}\label{eq:BS power consumption}
\bar{P}_{\text{BS}}(R, \lambda) & = \left\{ \begin{array}{cl} \displaystyle
a\bar{P}_{t}(R, \lambda) + P_{c}, &  ~ R > 0 \\
P_{\text{sleep}}, &  ~ R = 0 \end{array}\right.
\end{align}
where $\bar{P}_{\text{BS}}(R,\lambda)$ represents the (short-term) average power consumption at BS given a pair of $R$ and $\lambda$, $P_{\text{sleep}}$ denotes the power consumed during the off mode, and $a \geq 1$ corresponds to the scaling of the actual power consumed with the radiated power due to amplifier and feeder losses. In practice, $P_{\text{sleep}}$ is generally much smaller than $P_{c}$ \cite{3GppTR} and thus in this paper, we assume $P_{\text{sleep}} = 0$ for simplicity. Since $a$ is only a scaling constant, we further assume $a = 1$ in our subsequent analysis unless stated otherwise.

\subsection{Optimal Cell Adaptation}\label{sec:optimal scheme}
According to (\ref{eq:power radium relationship}), $\bar{P}_t(R, \lambda)$ is determined by $R$ and $\lambda$. LTPC is thus equivalent to range adaptation over $\lambda$, i.e., by first finding the range adaptation function $R(\lambda)$ and then obtaining $\bar{P}_t(R, \lambda)$ as $\bar{P}_t(R(\lambda), \lambda)$, the LTPC policy $\bar{P}_{\text{BS}}(R(\lambda), \lambda)$ follows from (\ref{eq:BS power consumption}). The joint cell range adaptation and LTPC problem can thus be formulated as
\begin{align}
\mathrm{(P0)}:~\mathop{\mathtt{Min.}}\limits_{R(\lambda) \geq 0} &
~~ \mathbb{E}_{\lambda}\left[\bar{P}_{\text{BS}}(R(\lambda), \lambda)\right]  \\
\mathtt{s.t.} & ~~ \mathbb{E}_{\lambda}\left[U(R(\lambda), \lambda)\right] \geq U_{\text{avg}} \label{eq:capacity constraint}\\
& ~~ \bar{P}_{\text{BS}}(R(\lambda), \lambda) \leq P_{\text{max}}, ~~ \forall \lambda \label{eq:max power constraint}
\end{align}
where $U(R(\lambda),\lambda)= \pi\lambda R^2(\lambda)$ corresponds to the (short-term) average number of supported MUs, $U_{\text{avg}}$ represents the (long-term) system throughput\footnote{Since a constant rate requirement $\bar{v}$ is assumed for all MUs and the effective system throughput equals to $\bar{v}U_{\text{avg}}(1-\bar{\text{P}}_{\text{out}})$, where $\bar{\text{P}}_{\text{out}}$ is a given outage probability target, the average number of supported MUs $U_{\text{avg}}$ is an equivalent measure of the effective system throughput.} constraint, and $P_{\text{max}}$ is the (short-term) power constraint at BS. For convenience, in the rest of this paper, $\bar{P}_t(R(\lambda),\lambda)$ and $\bar{P}_{\text{BS}}(R(\lambda),\lambda)$ are referred to as (short-term average) transmit power and power consumption at BS for a given $\lambda$, respectively, while $\mathbb{E}_{\lambda}\left[\bar{P}_t(R(\lambda),\lambda)\right]$ and $\mathbb{E}_{\lambda}\left[\bar{P}_{\text{BS}}(R(\lambda),\lambda)\right]$ are called the (long-term) average transmit power and average power consumption at BS, respectively.

Note that if choosing $R(\lambda)$ such that $\bar{P}_{\text{BS}}(R(\lambda),\lambda) = P_{\text{max}}$ for all $\lambda > 0$ still leads to a violation of constraint (\ref{eq:capacity constraint}), then Problem (P0) is infeasible. For analytical tractability, we only consider the case where $U_{\text{avg}}$ yields a feasible (P0). (P0) is not convex due to the non-convexity of both the objective function (at $R = 0$) and the throughput constraint (\ref{eq:capacity constraint}) since $U(R(\lambda), \lambda)$ is a non-concave function over $R(\lambda)$.

We start with reformulating (P0) via a change of variable: $x = R^2$, and making the constraint (\ref{eq:max power constraint}) implicit, which yields an equivalent problem
\begin{align}
\mathrm{(P1)}:~\mathop{\mathtt{Min.}}\limits_{x(\lambda) \in \mathcal{X}_a} &
~~ \mathbb{E}_{\lambda}\left[\bar{P}_{\text{BS}}(x(\lambda), \lambda)\right] \\
\mathtt{s.t.} & ~~ \mathbb{E}_{\lambda}\left[U(x(\lambda), \lambda)\right] \geq U_{\text{avg}} \label{eq:capacity constraint p1}
\end{align}
where $\mathcal{X}_a \triangleq \left\{x(\lambda) : x(\lambda) \geq 0, \bar{P}_{\text{BS}}(x(\lambda), \lambda) \leq P_{\text{max}}, \forall \lambda \right\}$. In (P1), the constraint (\ref{eq:capacity constraint p1}) becomes convex since $U(x(\lambda), \lambda) = \pi\lambda x(\lambda)$ is affine over $x(\lambda)$. Furthermore, $\mathcal{X}_a$ is a convex set, and $\mathbb{E}_{\lambda}\left[\bar{P}_{\text{BS}}(x(\lambda), \lambda)\right]$ is the affine mapping of an infinite number of quasi-convex functions $\bar{P}_{\text{BS}}(x(\lambda), \lambda)$ and can be shown to be quasi-convex. Therefore, (P1) is a quasi-convex optimization problem and it can be verified that Lagrangian duality method can be applied to solve (P1) globally optimally \cite{Boyd}. The Lagrangian of Problem (P1) is
\begin{align}
\mathcal{L}(x(\lambda), \mu) = \mathbb{E}_{\lambda}\left[\bar{P}_{\text{BS}}(x(\lambda), \lambda)\right] - \mu\left(\mathbb{E}_{\lambda}\left[U(x(\lambda), \lambda)\right] - U_{\text{avg}}\right)
\end{align}
where $\mu \geq 0$ is the dual variable associated with the throughput constraint (\ref{eq:capacity constraint p1}). Then it can be shown that solving (P1) is equivalent to solving parallel subproblems all having the same structure and each for a different value of $\lambda$. For a particular $\lambda$, the associated subproblem is expressed as
\begin{align}\label{eq:dual problem of p2 after decouple}
\mathop{\mathtt{Min.}}\limits_{x(\lambda) \in \mathcal{X}_a} &
~~ L_{\lambda}(x(\lambda), \mu)
\end{align}
where $L_{\lambda}(x(\lambda), \mu) = \bar{P}_{\text{BS}}(x(\lambda), \lambda) - \mu U(x(\lambda), \lambda)$.

To tackle the non-continuity of $\bar{P}_{\text{BS}}(x(\lambda), \lambda)$ at $x(\lambda) = 0$ (due to $P_{c} > P_{\text{sleep}} \triangleq 0$) and the power constraint $\bar{P}_{\text{BS}}(x(\lambda), \lambda) \leq P_{\text{max}}$, we first consider the case where BS is always on, i.e., $x(\lambda) > 0$ (thus, $\bar{P}_{\text{BS}}(x(\lambda), \lambda)$ is always differentiable) and there is no power constraint, i.e., $P_{\text{max}} = +\infty$. The power constraint and the non-continuity at $x(\lambda) = 0$ will be incorporated into the solution later without loss of optimality.

Denote $x_1^{*}(\lambda)$ and $x_2^{*}(\lambda)$ as the roots of the following two equations:
\begin{align}
\frac{\partial L_{\lambda}(x(\lambda), \mu)}{\partial x(\lambda)} & = 0, ~~  x(\lambda) > 0 \label{eq:optimal condition for always on}\\
\bar{P}_{\text{BS}}(x(\lambda), \lambda) & = P_{\text{max}}, \label{eq:radius controled by pmax}
\end{align}
respectively, where (\ref{eq:optimal condition for always on}) is the optimality condition for $x(\lambda)$ in the case where BS is always on with infinite power budget and (\ref{eq:radius controled by pmax}) gives the maximum coverage range due to finite $P_{\text{max}}$ for any given $\lambda$. {Note that it is difficult to obtain closed-form solutions for $x_1^{*}(\lambda)$ and $x_2^{*}(\lambda)$ due to the complex form of $\bar{P}_{\text{BS}}(x(\lambda), \lambda)$ in (\ref{eq:BS power consumption}). However, since $\bar{P}_{\text{BS}}(x(\lambda), \lambda)$ is a strictly increasing function of $x(\lambda)$, and furthermore is convex in $x(\lambda)$ when $x(\lambda) > 0$, $x_1^{*}(\lambda)$ and $x_2^{*}(\lambda)$ can both be obtained numerically by a simple bisection search given $\mu$ and/or $\lambda$.}

Let $x^{*}(\lambda)$ denote the optimal solution of Problem (\ref{eq:dual problem of p2 after decouple}) with finite $P_c$ and $P_{\text{max}}$. Then $x^{*}(\lambda)$ has three possible values: $x_1^{*}(\lambda)$, $x_2^{*}(\lambda)$ and 0, where $x_2^{*}(\lambda)$ is taken when $x_1^{*}(\lambda)$ violates the power constraint of $P_{\text{max}}$, i.e., $\bar{P}_{\text{BS}}(x_1^{*}(\lambda), \lambda) > P_{\text{max}}$. In the case of $\bar{P}_{\text{BS}}(x_1^{*}(\lambda), \lambda) \leq P_{\text{max}}$, a comparison between $L_{\lambda}(x_1^{*}(\lambda), \mu)$ and $L_{\lambda}(0, \mu) = 0$ is needed to tackle the non-continuity due to $P_c > 0$. If $L_{\lambda}(x_1^{*}(\lambda), \mu) < 0$, $x_1^{*}(\lambda)$ indeed gives the optimal solution; otherwise, we have $x^{*}(\lambda) = 0$ since it minimizes $L_{\lambda}(x(\lambda), \mu)$ over $x(\lambda) \geq 0$. On the other hand, if $\bar{P}_{\text{BS}}(x_1^{*}(\lambda), \lambda) > P_{\text{max}}$, a similar comparison between $L_{\lambda}(x_2^{*}(\lambda), \mu)$ and $L_{\lambda}(0, \mu) = 0$ is needed to verify the optimality between $x_2^{*}(\lambda)$ and $0$. Thus, the signs of $L_{\lambda}(x_1^{*}(\lambda), \mu)$ and $L_{\lambda}(x_2^{*}(\lambda), \mu)$ as well as the value of $\bar{P}_{\text{BS}}(x_1^{*}(\lambda), \lambda)$ jointly determine $x^{*}(\lambda)$, as summarized below:
\begin{align}\label{eq:original solution}
x^{*}(\lambda) & = \left\{ \begin{array}{cl} \displaystyle
x_1^{*}(\lambda) & \mbox{if } \left. \begin{array}{cl} \displaystyle \bar{P}_{\text{BS}}(x_1^{*}(\lambda), \lambda) \leq P_{\text{max}}, \\ L_{\lambda}(x_1^{*}(\lambda), \mu) < 0 \end{array}\right. \\
x_2^{*}(\lambda) & \mbox{if } \left. \begin{array}{cl} \displaystyle \bar{P}_{\text{BS}}(x_1^{*}(\lambda), \lambda) > P_{\text{max}}, \\ L_{\lambda}(x_2^{*}(\lambda), \mu) < 0 \end{array}\right. \\
0 & \mbox{otherwise}. \end{array}\right.
\end{align}

To avoid checking the conditions in (\ref{eq:original solution}) for all $\lambda$'s and gain more insights to the optimal power and range adaptation scheme, we proceed to characterize some critical values of $\lambda$, based on which the BS can determine $x^{*}(\lambda)$ with only the knowledge of the current density $\lambda$, through the following lemmas.
\begin{lemma}\label{lemma:0}
There exists $\lambda_1$, where $L_{\lambda}(x_1^{*}(\lambda_1), \mu) = 0$, such that $L_{\lambda}(x_1^{*}(\lambda), \mu)$ is positive for all $\lambda < \lambda_1$ and negative for all $\lambda > \lambda_1$.
\end{lemma}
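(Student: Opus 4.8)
The plan is to reduce the lemma to a statement about the single-variable value function $g(\lambda) \triangleq L_{\lambda}(x_1^{*}(\lambda),\mu)$ and to show that $g$ is continuous, strictly decreasing on $\lambda>0$, with $g(\lambda)\to P_c>0$ as $\lambda\to 0^{+}$ and $g(\lambda)\to -\infty$ as $\lambda\to\infty$; the claimed $\lambda_1$ and the sign pattern then follow from the intermediate value theorem. First I would record that, for $x(\lambda)>0$ and with $a=1$, $L_{\lambda}(x,\mu)=\bar{P}_t(x,\lambda)+P_c-\mu\pi\lambda x$, which (as already noted before the lemma) is convex in $x$ with $\partial L_{\lambda}/\partial x \to -\mu\pi\lambda<0$ as $x\to 0^{+}$ and $\to+\infty$ as $x\to\infty$; hence $x_1^{*}(\lambda)>0$ is the unique interior minimizer, and $g(\lambda)=P_c+\min_{x>0}\,[\bar{P}_t(x,\lambda)-\mu\pi\lambda x]$.

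The key step is a change of variable from the coverage area $x$ to the (average) user count $u=\pi\lambda x$. Substituting into (\ref{eq:power radium relationship}) gives $\bar{P}_t=c(\lambda)\psi(u)$, where $c(\lambda)\triangleq D_1(\pi\lambda)^{-\alpha/2}$ and $\psi(u)\triangleq u^{\alpha/2}(2^{D_2 u}-1)$, while $\mu\pi\lambda x=\mu u$, so that $g(\lambda)=P_c+m(\lambda)$ with $m(\lambda)\triangleq\min_{u>0}\,[c(\lambda)\psi(u)-\mu u]$. This isolates all dependence on $\lambda$ in the single strictly decreasing prefactor $c(\lambda)$ multiplying the nonnegative function $\psi$, which is exactly what makes the monotonicity transparent.

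I would then prove that $m$, and hence $g$, is strictly decreasing. Since $\psi(u)>0$ for every $u>0$ and $c(\cdot)$ is strictly decreasing, the bracket $c(\lambda)\psi(u)-\mu u$ is, for each fixed $u>0$, strictly decreasing in $\lambda$. For $\lambda_a<\lambda_b$, letting $u_a>0$ be the minimizer at $\lambda_a$ (positive because $u_a=\pi\lambda_a x_1^{*}(\lambda_a)$ with $x_1^{*}>0$), one has $m(\lambda_b)\leq c(\lambda_b)\psi(u_a)-\mu u_a < c(\lambda_a)\psi(u_a)-\mu u_a = m(\lambda_a)$, giving strict monotonicity. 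I expect this to be the main obstacle: a direct differentiation of $g$ via the envelope theorem gives $\partial L_{\lambda}/\partial\lambda|_{x_1^{*}}=D_1 D_2\pi\ln 2\,x^{\alpha/2+1}2^{D_2\pi\lambda x}-\mu\pi x$, whose sign is not manifestly negative, so the substitution $u=\pi\lambda x$ is the real content of the argument rather than a cosmetic rewrite.

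Finally I would pin down the two limits. As $\lambda\to 0^{+}$, $c(\lambda)\to+\infty$ forces the minimizing $u^{*}$ to $0$ (balancing $c\,\psi'(u)$ against $\mu$), and the expansion $\psi(u)\approx D_2\ln 2\,u^{\alpha/2+1}$ yields $m(\lambda)\to 0^{-}$, hence $g\to P_c>0$; as $\lambda\to\infty$, $c(\lambda)\to 0$ so $m(\lambda)\leq c(\lambda)\psi(u)-\mu u\to -\mu u$ for any fixed $u$, and letting $u\to\infty$ gives $m(\lambda)\to-\infty$, hence $g\to-\infty$. Continuity of $g$ follows from continuity of $c(\lambda)$ and of the value function $m$. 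Combining continuity, strict monotonicity, and these limits, $g$ has a unique zero $\lambda_1$ with $g(\lambda)>0$ for $\lambda<\lambda_1$ and $g(\lambda)<0$ for $\lambda>\lambda_1$, which is precisely the assertion since $L_{\lambda}(x_1^{*}(\lambda),\mu)=g(\lambda)$.
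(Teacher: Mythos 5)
Your proof is correct, and it takes a genuinely different---in fact stronger---route than the paper's. The paper assembles the sign pattern from three separate pieces: (i) $L_{\lambda}(x_1^{*}(\lambda),\mu)$ must be negative for some $\lambda$, shown by contradiction with the throughput constraint (\ref{eq:capacity constraint p1}) (so it leans on the assumed feasibility of (P0)); (ii) once $L_{\lambda}(x_1^{*}(\lambda_a),\mu)\leq 0$ it stays negative for all $\lambda_b>\lambda_a$, shown via the test point $x_b=x_a\lambda_a/\lambda_b$; and (iii) positivity for small $\lambda$, via the bound $2^{y}-1>(\ln 2)y$. Your change of variable $u=\pi\lambda x$ is precisely the scaling behind the paper's step (ii)---holding $u$ fixed while $\lambda$ grows shrinks the prefactor $x^{\alpha/2}=(u/(\pi\lambda))^{\alpha/2}$---but you elevate it from a pointwise comparison into strict monotonicity of the value function $g(\lambda)=P_c+m(\lambda)$, which buys uniqueness of $\lambda_1$ and, combined with continuity, the exact equality $L_{\lambda}(x_1^{*}(\lambda_1),\mu)=0$ asserted in the lemma; the paper never verifies continuity, so that equality is left implicit in its argument. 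Your small-$\lambda$ limit is the same Taylor mechanism as the paper's $h(x)$ computation; to make it rigorous, use the one-sided bound $2^{D_2u}-1\geq(\ln 2)D_2u$ rather than an expansion, which gives $0>m(\lambda)\geq-\mathrm{const}\cdot\lambda\to 0^{-}$, matching the paper exactly. Two caveats worth recording: first, your $\lambda\to\infty$ limit ranges over densities beyond the support $[0,\lambda_{\text{max}}]$---harmless for the lemma as stated, and it frees the proof from the feasibility assumption, but it is the paper's feasibility argument that guarantees the crossing $\lambda_1$ falls where the distribution of $\lambda$ actually lives, i.e., that the threshold in Theorem \ref{theorem:1} is operationally meaningful; second, your continuity assertion deserves one explicit line, e.g., $m$ is a pointwise minimum over $u$ of functions affine in $c(\lambda)$, hence concave and therefore continuous in $c$ on $c>0$. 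Finally, your side remark that the envelope derivative is ``not manifestly negative'' undersells the direct route: substituting the first-order condition (\ref{eq:expanded for x1}) into $\left.\partial L_{\lambda}/\partial\lambda\right|_{x_1^{*}}=\pi x_1^{*}\bigl[(\ln 2)D_1D_2 (x_1^{*})^{\alpha/2}2^{D_2\pi\lambda x_1^{*}}-\mu\bigr]$ shows the bracket equals $-\frac{\alpha D_1 (x_1^{*})^{\alpha/2-1}(2^{D_2\pi\lambda x_1^{*}}-1)}{2\pi\lambda}<0$, so a direct envelope argument also yields strict monotonicity---though your substitution remains the cleaner packaging, since it isolates all $\lambda$-dependence in the single decreasing factor $c(\lambda)$.
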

\begin{proof}
See Appendix \ref{appendix:proof lemma 0}.
\end{proof}
\begin{lemma}\label{lemma:1}
$x_1^{*}(\lambda)$ is a strictly decreasing function of $\lambda$; $\bar{P}_{\text{BS}}(x_1^{*}(\lambda), \lambda)$ and $U(x_1^{*}(\lambda), \lambda)$ are all strictly increasing functions of $\lambda$.
\end{lemma}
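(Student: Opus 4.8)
The plan is to start from the stationarity condition that defines $x_1^{*}(\lambda)$ and to exploit a scaling (quasi-homogeneity) structure of $\bar{P}_t$ that collapses the awkward mixed partial derivatives into lower-order terms. Writing $\bar{P}_{\text{BS}}(x,\lambda)=\bar{P}_t(x,\lambda)+P_c$ for $x>0$, with $\bar{P}_t(x,\lambda)=D_1 x^{\alpha/2}\left(2^{D_2\pi\lambda x}-1\right)$ from Theorem \ref{theorem:0} and $U(x,\lambda)=\pi\lambda x$, the optimality condition (\ref{eq:optimal condition for always on}) becomes
\begin{align}
\frac{\partial \bar{P}_t}{\partial x}\left(x_1^{*}(\lambda),\lambda\right)=\mu\pi\lambda .
\end{align}
Since $\bar{P}_t$ is strictly convex in $x$ for $x>0$ (the explicit form gives $\partial_{xx}\bar{P}_t>0$ because $\alpha>2$), the left-hand side is strictly increasing in $x$, so $x_1^{*}(\lambda)$ is the unique root and, by the implicit function theorem, is continuously differentiable wherever it is positive.

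The key observation is that $\bar{P}_t$ depends on $(x,\lambda)$ only through the product $\lambda x$ inside the exponent and through the homogeneous prefactor $x^{\alpha/2}$, so that $\bar{P}_t(sx,\lambda/s)=s^{\alpha/2}\bar{P}_t(x,\lambda)$ for every $s>0$. Differentiating this in $s$ at $s=1$ yields the Euler-type identity $x\,\partial_x\bar{P}_t-\lambda\,\partial_\lambda\bar{P}_t=\tfrac{\alpha}{2}\bar{P}_t$, and differentiating once more in $x$ gives the workhorse relation
\begin{align}
x\,\partial_{xx}\bar{P}_t-\lambda\,\partial_{x\lambda}\bar{P}_t=\left(\tfrac{\alpha}{2}-1\right)\partial_x\bar{P}_t .
\end{align}
For the first claim I would differentiate the stationarity condition implicitly, obtaining $\mathrm{d}x_1^{*}/\mathrm{d}\lambda=-(\partial_{x\lambda}\bar{P}_t-\mu\pi)/\partial_{xx}\bar{P}_t$; the denominator is positive by strict convexity, and substituting $\mu\pi=\partial_x\bar{P}_t/\lambda$ together with the workhorse relation reduces the numerator, up to the positive factor $1/\lambda$, to $x\,\partial_{xx}\bar{P}_t-\tfrac{\alpha}{2}\partial_x\bar{P}_t$. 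After factoring out $D_1 x^{\alpha/2-1}$ and setting $t\triangleq(\ln 2)D_2\pi\lambda x>0$, its positivity is equivalent to the elementary inequality $\tfrac{\alpha}{2}\bigl(e^{t}(t-1)+1\bigr)+t^{2}e^{t}>0$, which holds because $g(t)=e^{t}(t-1)+1$ satisfies $g(0)=0$ and $g'(t)=te^{t}>0$. Hence $\mathrm{d}x_1^{*}/\mathrm{d}\lambda<0$.

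The remaining two monotonicities follow from the same two identities by the chain rule. For $U(x_1^{*}(\lambda),\lambda)=\pi\lambda x_1^{*}(\lambda)$ I would write $\mathrm{d}U/\mathrm{d}\lambda=\pi\bigl(x_1^{*}+\lambda\,\mathrm{d}x_1^{*}/\mathrm{d}\lambda\bigr)$; substituting the implicit derivative and using $\lambda\mu\pi=\partial_x\bar{P}_t$, the bracket simplifies through the workhorse relation to $\tfrac{\alpha}{2}\,\partial_x\bar{P}_t/\partial_{xx}\bar{P}_t>0$, giving strict monotone increase. For $\bar{P}_{\text{BS}}(x_1^{*}(\lambda),\lambda)$ I would use $\mathrm{d}\bar{P}_{\text{BS}}/\mathrm{d}\lambda=\partial_x\bar{P}_t\cdot\mathrm{d}x_1^{*}/\mathrm{d}\lambda+\partial_\lambda\bar{P}_t$; after the same substitutions, and now invoking the Euler identity in the form $\lambda\partial_\lambda\bar{P}_t-x\partial_x\bar{P}_t=-\tfrac{\alpha}{2}\bar{P}_t$, the numerator collapses, up to a positive factor, to $\tfrac{\alpha}{2}\bigl((\partial_x\bar{P}_t)^{2}-\bar{P}_t\,\partial_{xx}\bar{P}_t\bigr)$. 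This is positive precisely because $\bar{P}_t$ is strictly log-concave in $x$, which I would verify directly from $\ln\bar{P}_t=\mathrm{const}+\tfrac{\alpha}{2}\ln x+\ln\!\left(2^{D_2\pi\lambda x}-1\right)$, both summands being strictly concave in $x$.

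I expect the main obstacle to be organizational rather than computational: naively, each claim requires signing a different combination of second-order partials of a transcendental function, and doing so head-on is messy. The crucial step that makes the argument clean is recognizing the quasi-homogeneity identity, which converts every mixed partial $\partial_{x\lambda}\bar{P}_t$ into curvature and first-order terms; after that, each of the three claims collapses to a single single-variable inequality (namely $g(t)>0$, and the log-concavity of $\bar{P}_t$ in $x$), both of which are elementary. A secondary technical point to handle carefully is confirming strictness throughout (strict convexity and strict log-concavity), which relies on $\alpha>2$ and $\lambda,x>0$, and restricting attention to the regime where $x_1^{*}(\lambda)>0$ so that $\bar{P}_{\text{BS}}$ is differentiable and the implicit function theorem applies.
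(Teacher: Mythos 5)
Your proposal is correct, but it takes a genuinely different route from the paper's. The paper never differentiates implicitly: it expands the first-order condition (\ref{eq:optimal condition for always on}) via the power series of $2^{x}$ into an identity whose left-hand side is strictly increasing in both $x$ and $\lambda x$, and obtains all three monotonicities by ``to preserve the equality, this quantity must move in that direction'' rearrangement arguments; in particular, the increase of $\bar{P}_{\text{BS}}(x_1^{*}(\lambda),\lambda)$ is extracted by recasting the same condition as an identity in which $\bar{P}_{\text{BS}}-P_c$ appears multiplied by coefficients ($\frac{\alpha}{2\lambda x_1^{*}}$ and a constant) plus a term $(\ln 2)D_1D_2\pi x_1^{*}(\lambda)^{\alpha/2}$, all of whose monotonicities are already known from the first two claims. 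You instead apply the implicit function theorem to the stationarity condition $\partial_x\bar{P}_t=\mu\pi\lambda$ and exploit the quasi-homogeneity $\bar{P}_t(sx,\lambda/s)=s^{\alpha/2}\bar{P}_t(x,\lambda)$, whose Euler consequences eliminate the mixed partial $\partial_{x\lambda}\bar{P}_t$. I checked your computations and they are sound: the numerator of $-\mathrm{d}x_1^{*}/\mathrm{d}\lambda$ reduces, up to the factor $1/\lambda$, to $x\partial_{xx}\bar{P}_t-\frac{\alpha}{2}\partial_x\bar{P}_t=D_1x^{\alpha/2-1}\left[\frac{\alpha}{2}\left(e^{t}(t-1)+1\right)+t^2e^{t}\right]>0$ with $t=(\ln 2)D_2\pi\lambda x$; the bracket in $\mathrm{d}U/\mathrm{d}\lambda$ simplifies to $\frac{\alpha}{2}\partial_x\bar{P}_t/\partial_{xx}\bar{P}_t>0$; and the numerator of $\lambda\,\mathrm{d}\bar{P}_{\text{BS}}/\mathrm{d}\lambda$ collapses to $\frac{\alpha}{2}\left((\partial_x\bar{P}_t)^2-\bar{P}_t\partial_{xx}\bar{P}_t\right)$, which is strictly positive since $\frac{\alpha}{2}\ln x$ and $\ln\left(2^{D_2\pi\lambda x}-1\right)$ are both strictly concave in $x$ (the latter has second derivative $-c^2e^{cx}/(e^{cx}-1)^2<0$). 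As to what each approach buys: the paper's series argument is more elementary — it needs no differentiability of $x_1^{*}(\lambda)$, no second-order partials, and no implicit function theorem — whereas yours delivers explicit signed formulas for the derivatives of $x_1^{*}$, $U$, and $\bar{P}_{\text{BS}}$ along the optimal trajectory (quantitative rate information the paper's proof does not produce) and isolates structural facts (homogeneity, log-concavity of $\bar{P}_t$ in $x$) that would transfer directly to the proof of Lemma \ref{lemma:2} and to the HSE approximations in Lemma \ref{lemma:3}. Your attention to strictness (via $\alpha>2$ and $x,\lambda>0$) and to restricting to the regime $x_1^{*}(\lambda)>0$ where $\bar{P}_{\text{BS}}$ is differentiable is exactly the care this route requires.
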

\begin{proof}
See Appendix \ref{appendix:proof lemma 1}.
\end{proof}
\begin{lemma}\label{lemma:2}
$x_2^{*}(\lambda)$ is a strictly decreasing function of $\lambda$; $U(x_2^{*}(\lambda), \lambda)$ is a strictly increasing function of $\lambda$.
\end{lemma}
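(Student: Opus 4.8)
The plan is to work with the change of variable $x = R^2$, so that by Theorem \ref{theorem:0} the transmit power reads $\bar{P}_{t}(x,\lambda) = D_1 x^{\alpha/2}\bigl(2^{D_2\pi\lambda x}-1\bigr)$, and the defining relation (\ref{eq:radius controled by pmax}) for $x_2^{*}(\lambda)$ becomes, after subtracting the constant circuit power,
\begin{align}\label{eq:x2-defining}
D_1\,\bigl(x_2^{*}(\lambda)\bigr)^{\alpha/2}\Bigl(2^{D_2\pi\lambda x_2^{*}(\lambda)}-1\Bigr) = P_{\text{max}} - P_{c}.
\end{align}
For this to admit a positive root we need $P_{\text{max}} > P_{c}$, which holds whenever (P0) is feasible with the BS on; I will write $\tilde{P} \triangleq P_{\text{max}}-P_{c} > 0$, a fixed positive constant. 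The two facts I lean on, both recorded in Remark \ref{remark:0}, are that $\bar{P}_{t}(x,\lambda)$ is strictly increasing in $x$ and strictly increasing in $\lambda$ for $x>0$, $\lambda>0$.

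First I would show $x_2^{*}(\lambda)$ is strictly decreasing, preferring a direct comparison over implicit differentiation. Fix $\lambda < \lambda'$ and suppose, for contradiction, that $x_2^{*}(\lambda) \leq x_2^{*}(\lambda')$. Evaluating $\bar{P}_{t}$ at $(x_2^{*}(\lambda'),\lambda')$ and applying monotonicity in $\lambda$ and then in $x$ gives $\bar{P}_{t}(x_2^{*}(\lambda'),\lambda') \geq \bar{P}_{t}(x_2^{*}(\lambda'),\lambda) \geq \bar{P}_{t}(x_2^{*}(\lambda),\lambda)$, with at least one inequality strict; but both ends equal $\tilde{P}$ by (\ref{eq:x2-defining}), a contradiction. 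Equivalently, implicitly differentiating (\ref{eq:x2-defining}) yields $\mathrm{d}x_2^{*}/\mathrm{d}\lambda = -(\partial_{\lambda}\bar{P}_{t})/(\partial_{x}\bar{P}_{t}) < 0$ since both partial derivatives are positive. This parallels the argument for $x_1^{*}$ in the proof of Lemma \ref{lemma:1}.

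The key idea for the second claim is to rewrite (\ref{eq:x2-defining}) so that the average load $U = \pi\lambda x_2^{*}(\lambda)$ sits in the exponent while the remaining $\lambda$-dependence is carried by $x_2^{*}$ alone. Substituting $U(x_2^{*}(\lambda),\lambda) = \pi\lambda x_2^{*}(\lambda)$ into (\ref{eq:x2-defining}) gives
\begin{align}\label{eq:U-isolated}
2^{D_2\,U(x_2^{*}(\lambda),\lambda)} - 1 = \frac{\tilde{P}}{D_1\,\bigl(x_2^{*}(\lambda)\bigr)^{\alpha/2}}.
\end{align}
By the first part $x_2^{*}(\lambda)$ strictly decreases, so $\bigl(x_2^{*}(\lambda)\bigr)^{\alpha/2}$ strictly decreases (as $\alpha>2>0$), and the right-hand side of (\ref{eq:U-isolated}) strictly increases in $\lambda$. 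Since $t\mapsto 2^{D_2 t}-1$ is strictly increasing, $U(x_2^{*}(\lambda),\lambda)$ must be strictly increasing in $\lambda$, which is the second claim.

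I expect the second part to be the only real subtlety worth flagging: since $U=\pi\lambda x_2^{*}(\lambda)$ is a product of the increasing factor $\lambda$ and the decreasing factor $x_2^{*}(\lambda)$, its monotonicity does not follow from Part 1 directly. The substitution (\ref{eq:U-isolated}), which isolates the load in the exponent and leaves only the single monotone factor $(x_2^{*})^{\alpha/2}$ on the right, is what pins down the direction unambiguously, relying on $\alpha>2$ and $\tilde{P}>0$.
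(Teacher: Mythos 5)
Your proof is correct. Part 1 is exactly the paper's route: the paper dispatches this claim by citing Remark \ref{remark:0} (that $\bar{P}_t$ is strictly increasing in both arguments, so holding the power at the constant level $P_{\text{max}}-P_c$ forces $R$, hence $x$, down as $\lambda$ rises), and your contradiction argument simply spells that remark out, correctly noting that strictness comes from $x_2^{*}(\lambda') > 0$. For Part 2 the paper omits the details, saying only that the argument is ``similar to that of Lemma \ref{lemma:1}'' in Appendix \ref{appendix:proof lemma 1} --- there, one holds the product $\lambda x^{*}(\lambda)$ fixed, observes that the left-hand side of the defining equation would then drop through the $x^{\alpha/2}$ factor as $\lambda$ grows, and concludes the product must rise; that proof needs a series expansion because the stationarity condition (\ref{eq:optimal condition for always on}) defining $x_1^{*}$ does not separate. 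Your rewriting $2^{D_2 U}-1 = (P_{\text{max}}-P_c)\big/\bigl(D_1 (x_2^{*}(\lambda))^{\alpha/2}\bigr)$ is the same monotone-comparison idea executed more cleanly: the constraint $\bar{P}_t = P_{\text{max}}-P_c$ factors multiplicatively into a term depending only on $x_2^{*}$ and a term depending only on $U=\pi\lambda x_2^{*}$, so Part 1 alone pins down the direction of $U$ with no expansion needed. You also make explicit two hypotheses the paper leaves implicit: that $P_{\text{max}} > P_c$ is required for a positive root $x_2^{*}(\lambda)$ to exist, and that $U$'s monotonicity is a genuine claim (not a corollary of Part 1) since $U$ is a product of oppositely monotone factors.
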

\begin{proof}
The monotonicity of $x_2^{*}(\lambda)$ can be directly obtained from Remark \ref{remark:0}. The proof for $U(x_2^{*}(\lambda), \lambda)$ is similar to that of Lemma \ref{lemma:1} in Appendix \ref{appendix:proof lemma 1}, and is thus omitted for brevity.
\end{proof}

Since $\bar{P}_{\text{BS}}(x_1^{*}(\lambda), \lambda)$ is a strictly increasing function of $\lambda$, there exists $\lambda_2$ with $\bar{P}_{\text{BS}}(x_1^{*}(\lambda_2), \lambda_2) = P_{\text{max}}$, above which $\bar{P}_{\text{BS}}(x_1^{*}(\lambda), \lambda) > P_{\text{max}}$. Furthermore, since $U(x_2^{*}(\lambda), \lambda)$ strictly increases with $\lambda$, $L_{\lambda}(x_2^{*}(\lambda), \mu) = P_{\text{max}} - \mu U(x_2^{*}(\lambda), \lambda)$ is thus a strictly decreasing function of $\lambda$ and there exists $\lambda_3$ with $L_{\lambda}(x_2^{*}(\lambda_3), \mu) =0$, such that $L_{\lambda}(x_2^{*}(\lambda), \mu) < 0$ for all $\lambda > \lambda_3$. Therefore, the conditions in (\ref{eq:original solution}) can be simplified as the inequalities among $\lambda_1$, $\lambda_2$ and $\lambda_3$, which is presented in the following theorem.
\begin{theorem}\label{theorem:1}
The optimal solution of Problem (P1) is given by
\begin{itemize}
        \item If $\lambda_2 \geq \lambda_1$
        \begin{align}
        x^{*}(\lambda) & = \left\{ \begin{array}{cl} \displaystyle
        0 & \mbox{if } \lambda \leq \lambda_1  \\
        x_1^{*}(\lambda) & \mbox{if } \lambda_1 < \lambda \leq \lambda_2  \\
        x_2^{*}(\lambda) & \mbox{otherwise}. \end{array}\right.
        \end{align}
        \item If $\lambda_2 < \lambda_1$
        \begin{align}
        x^{*}(\lambda) & = \left\{ \begin{array}{cl} \displaystyle
        0 & \mbox{if } \lambda \leq \lambda_3  \\
        x_2^{*}(\lambda) & \mbox{otherwise}. \end{array}\right.
        \end{align}
\end{itemize}
\end{theorem}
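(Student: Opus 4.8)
The plan is to prove Theorem \ref{theorem:1} by a case analysis that converts the pointwise optimality conditions in (\ref{eq:original solution}) into threshold conditions on $\lambda$ alone, using the monotonicity properties of Lemmas \ref{lemma:0}--\ref{lemma:2} together with the definitions of $\lambda_1,\lambda_2,\lambda_3$. Recall from (\ref{eq:original solution}) that $x^{*}(\lambda)$ equals $x_1^{*}(\lambda)$ exactly when $\bar{P}_{\text{BS}}(x_1^{*}(\lambda),\lambda)\le P_{\text{max}}$ and $L_{\lambda}(x_1^{*}(\lambda),\mu)<0$; equals $x_2^{*}(\lambda)$ exactly when $\bar{P}_{\text{BS}}(x_1^{*}(\lambda),\lambda)>P_{\text{max}}$ and $L_{\lambda}(x_2^{*}(\lambda),\mu)<0$; and equals $0$ otherwise. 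By Lemma \ref{lemma:0}, the sign of $L_{\lambda}(x_1^{*}(\lambda),\mu)$ is governed by $\lambda_1$; by Lemma \ref{lemma:1}, $\bar{P}_{\text{BS}}(x_1^{*}(\lambda),\lambda)$ is strictly increasing, so the power-constraint condition is governed by $\lambda_2$; and by Lemma \ref{lemma:2}, $L_{\lambda}(x_2^{*}(\lambda),\mu)=P_{\text{max}}-\mu U(x_2^{*}(\lambda),\lambda)$ is strictly decreasing, so its sign is governed by $\lambda_3$.

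The key step, which I expect to be the crux, is to pin down the position of $\lambda_3$ relative to $\lambda_1$ and $\lambda_2$, since the theorem branches on $\lambda_1$ versus $\lambda_2$ yet phrases the large-$\lambda$ answer through $\lambda_3$. I would first observe that at $\lambda=\lambda_2$ the two candidate radii coincide: since $\bar{P}_{\text{BS}}(x_1^{*}(\lambda_2),\lambda_2)=P_{\text{max}}=\bar{P}_{\text{BS}}(x_2^{*}(\lambda_2),\lambda_2)$ by the definitions of $\lambda_2$ and $x_2^{*}$, and $\bar{P}_{\text{BS}}(\cdot,\lambda_2)$ is strictly increasing in its first argument, it follows that $x_1^{*}(\lambda_2)=x_2^{*}(\lambda_2)$, and hence $L_{\lambda}(x_1^{*}(\lambda_2),\mu)=L_{\lambda}(x_2^{*}(\lambda_2),\mu)$. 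This identity transfers the Lemma \ref{lemma:0} sign information about $L_{\lambda}(x_1^{*}(\cdot),\mu)$ at $\lambda_2$ onto $L_{\lambda}(x_2^{*}(\cdot),\mu)$ at $\lambda_2$, which, combined with the strict monotonicity of $L_{\lambda}(x_2^{*}(\cdot),\mu)$ from Lemma \ref{lemma:2}, locates $\lambda_3$.

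Carrying this out, in the case $\lambda_2\ge\lambda_1$ Lemma \ref{lemma:0} gives $L_{\lambda}(x_1^{*}(\lambda_2),\mu)\le 0$, hence $L_{\lambda}(x_2^{*}(\lambda_2),\mu)\le 0$, so strict monotonicity forces $\lambda_3\le\lambda_2$. I then split the support at $\lambda_1$ and $\lambda_2$: for $\lambda\le\lambda_1$ we have $L_{\lambda}(x_1^{*}(\lambda),\mu)\ge 0$ and $\bar{P}_{\text{BS}}(x_1^{*}(\lambda),\lambda)\le P_{\text{max}}$ (as $\lambda\le\lambda_2$), so both nonzero branches are ruled out and $x^{*}(\lambda)=0$; for $\lambda_1<\lambda\le\lambda_2$ both defining conditions of the first branch hold, giving $x_1^{*}(\lambda)$; and for $\lambda>\lambda_2\ge\lambda_3$ the power constraint is violated while $L_{\lambda}(x_2^{*}(\lambda),\mu)<0$, giving $x_2^{*}(\lambda)$. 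Symmetrically, in the case $\lambda_2<\lambda_1$ Lemma \ref{lemma:0} gives $L_{\lambda}(x_1^{*}(\lambda_2),\mu)>0$, hence $L_{\lambda}(x_2^{*}(\lambda_2),\mu)>0$ and therefore $\lambda_3>\lambda_2$; the same branch-by-branch reading shows $x^{*}(\lambda)=0$ both for $\lambda\le\lambda_2$ (first branch fails since $\lambda<\lambda_1$, second is never triggered) and for $\lambda_2<\lambda\le\lambda_3$ (second branch fails since $\lambda\le\lambda_3$), while $x^{*}(\lambda)=x_2^{*}(\lambda)$ for $\lambda>\lambda_3$, collapsing to the stated two-piece form. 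The only genuine subtlety is the coincidence $x_1^{*}(\lambda_2)=x_2^{*}(\lambda_2)$ and the resulting placement of $\lambda_3$; everything else is a direct reading of (\ref{eq:original solution}) through the three monotonicity thresholds.
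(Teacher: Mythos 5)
Your proof is correct and follows the same overall strategy as the paper's: a branch-by-branch reading of (\ref{eq:original solution}) in which Lemmas \ref{lemma:0}--\ref{lemma:2} convert the three optimality conditions into thresholds on $\lambda$ alone, with the coincidence $x_1^{*}(\lambda_2)=x_2^{*}(\lambda_2)$, and hence $L_{\lambda}(x_1^{*}(\lambda_2),\mu)=L_{\lambda}(x_2^{*}(\lambda_2),\mu)$, used to locate $\lambda_3$ --- an identity the paper invokes implicitly (``Note that $L_{\lambda}(x_2^{*}(\lambda_2),\mu)=L_{\lambda}(x_1^{*}(\lambda_2),\mu)$'') and which you rightly isolate and justify via the strict monotonicity of $\bar{P}_{\text{BS}}(\cdot,\lambda_2)$ in its first argument. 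The one place you genuinely diverge is the case $\lambda_2<\lambda_1$: the paper first proves the stronger ordering $\lambda_3>\lambda_1>\lambda_2$ by a strict-convexity argument (the minimizer $x_1^{*}(\lambda_1)$ attains $L_{\lambda}=0$, so $L_{\lambda}(x_2^{*}(\lambda_1),\mu)>0$) and then splits the support at $\lambda_1$ and $\lambda_3$, whereas you derive only the weaker ordering $\lambda_3>\lambda_2$ from the same coincidence identity evaluated at $\lambda_2$, and split at $\lambda_2$ and $\lambda_3$. Your variant is slightly more economical --- it dispenses with the convexity step and treats the two cases symmetrically through the single identity at $\lambda_2$ --- and the weaker ordering does suffice, because on $(\lambda_2,\lambda_3]$ the first branch of (\ref{eq:original solution}) already fails on the power condition: $\bar{P}_{\text{BS}}(x_1^{*}(\lambda),\lambda)>P_{\text{max}}$ for all $\lambda>\lambda_2$ by Lemma \ref{lemma:1}, regardless of the sign of $L_{\lambda}(x_1^{*}(\lambda),\mu)$. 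You should state that power-violation step explicitly in your $\lambda_2<\lambda\le\lambda_3$ parenthetical, which currently mentions only the failure of the second branch. What the paper's stronger ordering buys is the structural fact $\lambda_3>\lambda_1$ itself, which it later cites when interpreting the numerical results in Section \ref{sec:performance comparison}; your argument does not deliver this fact, but it is not needed for the theorem statement.
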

\begin{proof}
See Appendix \ref{appendix:proof theorem 1}.
\end{proof}

Note that Problem (P1) needs to be solved by iteratively solving $x^{*}(\lambda)$ with a fixed $\mu$ based on Theorem \ref{theorem:1}, and updating $\mu$ via the bisection search until the throughput constraint (\ref{eq:capacity constraint p1}) is met with equality. The optimal solution of Problem (P0), $R^{*}(\lambda)$, can then be obtained as $R^{*}(\lambda) = \sqrt{x^{*}(\lambda)}$. From Theorem \ref{theorem:1}, Lemma \ref{lemma:1} and Lemma \ref{lemma:2}, we obtain the following corollary.
\begin{corollary}\label{corollary:0}
$R^{*}(\lambda)$ and $U(R^{*}(\lambda),\lambda)$ are strictly decreasing and increasing functions of $\lambda$, respectively, if $R^{*}(\lambda) > 0$; $\bar{P}_{\text{BS}}(R^{*}(\lambda),\lambda)$ is a non-decreasing function of $\lambda$ if $R^{*}(\lambda) > 0$.
\end{corollary}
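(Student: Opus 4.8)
The plan is to read off the piecewise form of $x^{*}(\lambda)$ supplied by Theorem \ref{theorem:1}, apply the regime-wise monotonicity already proved in Lemmas \ref{lemma:1} and \ref{lemma:2}, and then transport these facts to $R^{*}(\lambda) = \sqrt{x^{*}(\lambda)}$ and to $U(R^{*}(\lambda),\lambda) = \pi\lambda x^{*}(\lambda)$. First I would note that $R^{*}(\lambda) > 0$ is equivalent to $x^{*}(\lambda) > 0$, which by Theorem \ref{theorem:1} places $\lambda$ in an active regime where $x^{*}(\lambda)$ equals either $x_1^{*}(\lambda)$ or $x_2^{*}(\lambda)$. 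In the interior of each such regime the three assertions are immediate: Lemma \ref{lemma:1} gives that $x_1^{*}(\lambda)$ is strictly decreasing while $U(x_1^{*}(\lambda),\lambda)$ and $\bar{P}_{\text{BS}}(x_1^{*}(\lambda),\lambda)$ are strictly increasing, and Lemma \ref{lemma:2} supplies the corresponding statements for $x_2^{*}(\lambda)$. Since the square root is strictly increasing on the positives, strict decrease of $x^{*}(\lambda)$ transfers to $R^{*}(\lambda)$.

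The only delicate point is the switch at $\lambda_2$ in the case $\lambda_2 \geq \lambda_1$, where $x^{*}(\lambda)$ passes from $x_1^{*}(\lambda)$ to $x_2^{*}(\lambda)$, and I expect this boundary matching to be the main obstacle. I would argue that the two branches agree there, i.e. $x_1^{*}(\lambda_2) = x_2^{*}(\lambda_2)$: by definition $\lambda_2$ satisfies $\bar{P}_{\text{BS}}(x_1^{*}(\lambda_2),\lambda_2) = P_{\text{max}}$, which is exactly the defining equation (\ref{eq:radius controled by pmax}) of $x_2^{*}(\lambda_2)$, and since $\bar{P}_{\text{BS}}(\cdot,\lambda)$ is strictly increasing in its first argument (Remark \ref{remark:0}) this root is unique, forcing the two branches to coincide. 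Hence $x^{*}(\lambda)$ is continuous at $\lambda_2$, and being strictly decreasing on each side it is strictly decreasing across the whole active range $(\lambda_1,\infty)$; the identical gluing argument for $U(x^{*}(\lambda),\lambda) = \pi\lambda x^{*}(\lambda)$ (whose two branch values also coincide at $\lambda_2$) yields strict increase throughout. In the case $\lambda_2 < \lambda_1$ the active regime uses $x_2^{*}(\lambda)$ alone on $(\lambda_3,\infty)$, so no gluing is needed and Lemma \ref{lemma:2} applies directly.

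Finally I would treat the power-consumption claim separately, since it is asserted only as non-decreasing. On $(\lambda_1,\lambda_2]$ we have $\bar{P}_{\text{BS}}(x^{*}(\lambda),\lambda) = \bar{P}_{\text{BS}}(x_1^{*}(\lambda),\lambda)$, which strictly increases by Lemma \ref{lemma:1}; for $\lambda > \lambda_2$ we have $\bar{P}_{\text{BS}}(x^{*}(\lambda),\lambda) = \bar{P}_{\text{BS}}(x_2^{*}(\lambda),\lambda) = P_{\text{max}}$ by the defining equation of $x_2^{*}$, which is constant. Thus the power consumption rises strictly until it saturates at $P_{\text{max}}$ and is flat thereafter, so it is overall non-decreasing, which also explains why strict monotonicity cannot be claimed for this quantity. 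Collecting the three cases completes the argument, with the continuity verification at $\lambda_2$ being the step that prevents a jump from breaking strict monotonicity of $R^{*}$ and $U$ at the transition.
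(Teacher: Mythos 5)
Your proposal is correct and follows essentially the same route as the paper, which simply states that the corollary ``directly follows from Lemmas \ref{lemma:1} and \ref{lemma:2}'' and omits the details. Your fleshing-out is sound: in particular, the branch-matching observation that $x_1^{*}(\lambda_2) = x_2^{*}(\lambda_2)$ (both solve $\bar{P}_{\text{BS}}(x,\lambda_2) = P_{\text{max}}$, whose root is unique by strict monotonicity of $\bar{P}_{\text{BS}}$ in $x$) is exactly the detail the paper leaves implicit, and your account of why $\bar{P}_{\text{BS}}$ is only non-decreasing---strictly increasing on the $x_1^{*}$ branch, then saturated at $P_{\text{max}}$ on the $x_2^{*}$ branch---correctly explains the weaker claim for that quantity.
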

\begin{proof}
The proof directly follows from Lemmas \ref{lemma:1} and \ref{lemma:2}, and thus is omitted for brevity.
\end{proof}

Next, we illustrate the optimal solution $R^{*}(\lambda)$ to Problem (P0) to gain more insights to the optimal cell adaptation scheme. It is observed that there exists a cut-off value of $\lambda$ for each of the two cases in Theorem \ref{theorem:1}, below which the BS is switched off. This on-off behavior implies that allowing BS be switched off under light load is essentially optimal for energy saving. Since $x_2^{*}(\lambda)$ is the root of (\ref{eq:radius controled by pmax}), which corresponds to the maximum coverage range with finite $P_{\text{max}}$ for any given $\lambda$, it is worth noticing that when $\lambda_2 < \lambda_1$, constant power transmission with $P_{\text{max}}$ is optimal. The reason is that when $P_{\text{max}}$ is relatively small for the given throughput constraint $U_{\text{avg}}$, BS has to transmit at its maximum power at all the ``on'' time. According to Corollary \ref{corollary:0}, the average number of supported MUs $U(x^{*}(\lambda),\lambda)$ strictly increases with $\lambda$. This is because that under the optimal scheme, BS should support more MUs when the density is larger to optimize energy-efficiency.

\subsection{High Spectrum-Efficiency Regime}\label{sec:approximated optimal scheme}
Although Theorem \ref{theorem:1} reveals the structure of the optimal cell adaptation solution, which can be efficiently obtained numerically, the solution is expressed in terms of critical values of $\lambda$, namely $\lambda_1$, $\lambda_2$ and $\lambda_3$, for which closed-form expressions are difficult to be obtained. In this subsection, we obtain closed-form expressions of the solution in Theorem \ref{theorem:1} under a high spectrum-efficiency (HSE) assumption. It is observed from (\ref{eq:power radium relationship}) that $D_2\pi\lambda R^2 = \frac{\bar{v}\pi\lambda R^2}{W} = \frac{\bar{v}\mu_N}{W}$, which can be interpreted as the average network throughput in bps divided by the total bandwidth, and is thus the system spectrum-efficiency in bps/Hz. Therefore, the HSE assumption is equivalent to letting $D_2\pi\lambda R^2 \gg 1$. Under this condition, (\ref{eq:power radium relationship}) in Theorem \ref{theorem:0} can be simplified as
\begin{align}
\bar{P}_t(R, \lambda) = D_1R^{\alpha}2^{D_2\pi\lambda R^2}.
\end{align}

\begin{lemma}\label{lemma:3}
Under the HSE assumption of $D_2\pi\lambda R^2 \gg 1$, $x_1^{*}(\lambda)$ and $x_2^{*}(\lambda)$ in Theorem \ref{theorem:1} are given by
\begin{align}
x_1^{*}(\lambda) & = \frac{\alpha}{2D_3\pi\lambda}\mathcal{W}\left(\frac{2D_3\pi\lambda}{\alpha}\left(\frac{\mu}{D_1D_3}\right)^{\frac{2}{\alpha}}\right) \label{eq:x1 approximation}\\
x_2^{*}(\lambda) & = \frac{\alpha}{2D_3\pi\lambda}\mathcal{W}\left(\frac{2D_3\pi\lambda}{\alpha}\left(\frac{P^t_{\text{max}}}{D_1}\right)^{\frac{2}{\alpha}}\right) \label{eq:x2 approximation}
\end{align}
where $D_3 = (\ln2)D_2$, $P^t_{\text{max}} = P_{\text{max}}- P_c$, and $\mathcal{W}(\cdot)$ is the Lambert W function defined as $y = \mathcal{W}(y)e^{\mathcal{W}(y)}$ \cite{Wfunction}.
\end{lemma}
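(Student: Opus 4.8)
The plan is to take the two defining equations for $x_1^{*}(\lambda)$ and $x_2^{*}(\lambda)$ from the previous subsection, namely the first-order optimality condition (\ref{eq:optimal condition for always on}) and the maximum-power condition (\ref{eq:radius controled by pmax}), substitute the simplified HSE expression $\bar{P}_t(R,\lambda) = D_1 R^{\alpha} 2^{D_2\pi\lambda R^2}$, and then recognize that each reduces algebraically to an equation of the canonical form $z e^{z} = (\text{const})$, whose solution is by definition the Lambert W function. I would work throughout in the variable $x = R^2$ as in (P1), so that $\bar{P}_t$ in the HSE regime reads $\bar{P}_t = D_1 x^{\alpha/2} 2^{D_2\pi\lambda x} = D_1 x^{\alpha/2} e^{D_3 \pi\lambda x}$, using the given abbreviation $D_3 = (\ln 2)D_2$.

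For $x_2^{*}(\lambda)$ I would start from the cleaner of the two, the power-budget equation (\ref{eq:radius controled by pmax}). With $a=1$ and the on-mode model $\bar{P}_{\text{BS}} = \bar{P}_t + P_c$, setting $\bar{P}_{\text{BS}} = P_{\text{max}}$ is the same as $\bar{P}_t = P_{\text{max}} - P_c = P^t_{\text{max}}$, i.e. $D_1 x^{\alpha/2} e^{D_3\pi\lambda x} = P^t_{\text{max}}$. Raising both sides to the power $2/\alpha$ gives $x\, e^{(2 D_3\pi\lambda/\alpha)\, x} = (P^t_{\text{max}}/D_1)^{2/\alpha}$. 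Multiplying through by $2D_3\pi\lambda/\alpha$ puts this in the exact form $z e^{z} = y$ with $z = (2D_3\pi\lambda/\alpha)\, x$ and $y = (2D_3\pi\lambda/\alpha)(P^t_{\text{max}}/D_1)^{2/\alpha}$; inverting via $z = \mathcal{W}(y)$ and solving for $x$ yields exactly (\ref{eq:x2 approximation}). I expect this first case to be essentially mechanical once the substitution is made.

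For $x_1^{*}(\lambda)$ I would differentiate $L_\lambda(x,\mu) = \bar{P}_t(x,\lambda) + P_c - \mu\pi\lambda x$ with respect to $x$ and set it to zero. The derivative of the simplified transmit power is
\begin{align}
\frac{\partial \bar{P}_t}{\partial x} = D_1 e^{D_3\pi\lambda x}\left(\tfrac{\alpha}{2}x^{\alpha/2-1} + D_3\pi\lambda\, x^{\alpha/2}\right),
\end{align}
so the stationarity condition is $D_1 e^{D_3\pi\lambda x} x^{\alpha/2-1}\bigl(\tfrac{\alpha}{2} + D_3\pi\lambda x\bigr) = \mu\pi\lambda$. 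The mild subtlety here, and the step I would flag as the one needing care, is that this equation does not collapse to the $z e^{z}$ form unless the HSE assumption is used a second time at the level of the derivative: when $D_3\pi\lambda x \gg 1$ the bracket $\tfrac{\alpha}{2} + D_3\pi\lambda x$ is dominated by its second term, so I would approximate $\tfrac{\alpha}{2} + D_3\pi\lambda x \approx D_3\pi\lambda x$, equivalently noting that in the HSE regime $\partial\bar{P}_t/\partial x \approx D_1 D_3\pi\lambda\, x^{\alpha/2} e^{D_3\pi\lambda x}$ (the exponential term dominates the polynomial prefactor's derivative). This reduces the condition to $D_1 D_3\pi\lambda\, x^{\alpha/2} e^{D_3\pi\lambda x} = \mu\pi\lambda$, i.e. $x^{\alpha/2} e^{D_3\pi\lambda x} = \mu/(D_1 D_3)$.

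From there the manipulation parallels the $x_2^{*}$ case exactly: raise to the $2/\alpha$ power to get $x\, e^{(2D_3\pi\lambda/\alpha)x} = (\mu/(D_1 D_3))^{2/\alpha}$, multiply by $2D_3\pi\lambda/\alpha$, and invert with $\mathcal{W}$ to recover (\ref{eq:x1 approximation}). The main obstacle is therefore not the algebra but justifying the approximation $\tfrac{\alpha}{2} + D_3\pi\lambda x \approx D_3\pi\lambda x$ cleanly; I would argue that this is consistent with, and of the same order as, the HSE simplification $2^{D_2\pi\lambda R^2} - 1 \approx 2^{D_2\pi\lambda R^2}$ already invoked to pass from Theorem \ref{theorem:0} to the displayed simplified power law, so that both closed-form roots are accurate to the same leading order in the large parameter $D_3\pi\lambda x$. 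I would close by noting that $\mathcal{W}$ is single-valued and increasing on the relevant positive-argument branch, so each expression indeed identifies a unique positive root, matching $x_1^{*}$ and $x_2^{*}$ as the intended solutions.
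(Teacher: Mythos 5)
Your proposal is correct and takes essentially the same route as the paper's proof: both reduce the power-budget equation (\ref{eq:radius controled by pmax}) and the stationarity condition (\ref{eq:optimal condition for always on}) under the HSE assumption to the canonical form $ze^{z} = y$ and invert via the Lambert W function, including the second approximation $\tfrac{\alpha}{2} + D_3\pi\lambda x \approx D_3\pi\lambda x$ that you correctly flag as the delicate step (the paper invokes it identically, justifying it by $(\ln 2)D_2\pi\lambda x_1^{*}(\lambda) \gg \tfrac{\alpha}{2}$ with $\alpha = 2 \sim 6$ in practice). The only cosmetic differences are that you differentiate the already-HSE-simplified power law and work in natural-exponential form, whereas the paper differentiates the exact expression first and then applies HSE, using a base-$2$ inversion formula; these steps commute and produce the same equations.
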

\begin{proof}
See Appendix \ref{appendix:proof lemma 3}.
\end{proof}


The accuracy of the above HSE approximation will be verified by numerical results in Section \ref{sec:performance comparison}. With (\ref{eq:x1 approximation}) and (\ref{eq:x2 approximation}), closed-form expressions of $U(x_1^{*}(\lambda),\lambda)$, $U(x_2^{*}(\lambda),\lambda)$ and $\bar{P}_{\text{BS}}(x_1^{*}(\lambda), \lambda)$ under the HSE assumption can be easily obtained, which can be verified to preserve the properties given in Lemmas \ref{lemma:0}-\ref{lemma:2} by using properties of the Lambert W function. For brevity, we omit the details here.

Moreover, we obtain the following corollary from Lemma \ref{lemma:3}.
\begin{corollary}\label{corollary:1}
Under the HSE assumption of $D_2\pi\lambda R^2 \gg 1$, $\lambda_1$, $\lambda_2$ and $\lambda_3$ in Theorem \ref{theorem:1} are given by
\begin{align}
\lambda_1 & = \left(\frac{1}{\pi D_3} + \frac{P_c}{\mu\pi}\right)\left(\frac{D_1D_3}{\mu}\right)^{\frac{2}{\alpha}}\exp\left(\frac{2}{\alpha} + \frac{2D_3P_c}{\mu\alpha}\right) \label{eq:approximate lambda1} \\
\lambda_2 & = \frac{\alpha P^t_{\text{max}}}{2\pi(\mu-D_3P^t_{\text{max}})}\left(\frac{D_1D_3}{\mu}\right)^{\frac{2}{\alpha}}\exp\left(\frac{D_3P^t_{\text{max}}}{\mu-D_3P^t_{\text{max}}}\right) \label{eq:approximate lambda2}\\
\lambda_3 & = \frac{P_{\text{max}}}{\mu\pi}\left(\frac{D_1}{P^t_{\text{max}}}\right)^{\frac{2}{\alpha}}\exp\left(\frac{2D_3P_{\text{max}}}{\mu\alpha}\right). \label{eq:approximate lambda3}
\end{align}
\end{corollary}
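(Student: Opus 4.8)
The plan is to treat the three critical densities uniformly. Each $\lambda_i$ is fixed by one scalar equation (either $L_{\lambda}(x_i^{*}(\lambda_i),\mu)=0$ or an active power constraint $\bar{P}_{\text{BS}}=P_{\text{max}}$), so the strategy is to insert the closed forms for $x_1^{*}(\lambda)$ and $x_2^{*}(\lambda)$ from Lemma \ref{lemma:3}, reduce the defining equation to an algebraic equation in the single Lambert-W value $\mathcal{W}(\cdot)$, solve it, and then invert through the identity $\mathcal{W}(y)e^{\mathcal{W}(y)}=y$ to recover $\lambda_i$. Two simplifications make every substitution collapse. First, the throughput term loses its explicit $\lambda$ because the $1/(\pi\lambda)$ prefactor of $x_i^{*}$ cancels the $\pi\lambda$ in $U$, giving $U(x_i^{*}(\lambda),\lambda)=\pi\lambda x_i^{*}(\lambda)=\frac{\alpha}{2D_3}\mathcal{W}(\cdot)$. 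Second, the HSE stationarity condition underlying Lemma \ref{lemma:3} reads $\bar{P}_t(x_1^{*}(\lambda),\lambda)=\mu/D_3$, while $x_2^{*}$ is constructed so that $\bar{P}_t(x_2^{*}(\lambda),\lambda)=P^t_{\text{max}}$. I will write $A_1=\frac{2D_3\pi}{\alpha}\left(\frac{\mu}{D_1D_3}\right)^{2/\alpha}$ and $A_2=\frac{2D_3\pi}{\alpha}\left(\frac{P^t_{\text{max}}}{D_1}\right)^{2/\alpha}$ for the two Lambert-W arguments, so that $x_j^{*}(\lambda)=\frac{\alpha}{2D_3\pi\lambda}\mathcal{W}(A_j\lambda)$.

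First I would dispatch $\lambda_1$ and $\lambda_3$, which are routine. For $\lambda_1$, the condition $L_{\lambda}(x_1^{*}(\lambda_1),\mu)=0$ is $\bar{P}_{\text{BS}}(x_1^{*})=\mu U(x_1^{*})$, and inserting the two simplifications gives $\frac{\mu}{D_3}+P_c=\mu\frac{\alpha}{2D_3}\mathcal{W}(A_1\lambda_1)$, i.e. $\mathcal{W}(A_1\lambda_1)=\frac{2}{\alpha}\left(1+\frac{D_3P_c}{\mu}\right)$; applying $\lambda_1=\frac{1}{A_1}\mathcal{W}(A_1\lambda_1)e^{\mathcal{W}(A_1\lambda_1)}$ then yields (\ref{eq:approximate lambda1}). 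For $\lambda_3$, the condition $L_{\lambda}(x_2^{*}(\lambda_3),\mu)=0$ reads $P_{\text{max}}=\mu\frac{\alpha}{2D_3}\mathcal{W}(A_2\lambda_3)$, so $\mathcal{W}(A_2\lambda_3)=\frac{2D_3P_{\text{max}}}{\mu\alpha}$, and the same inversion (now with $A_2$) gives (\ref{eq:approximate lambda3}). Each needs only elementary algebra and one use of the Lambert-W identity.

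The delicate case, which I expect to be the main obstacle, is $\lambda_2$, defined by $\bar{P}_{\text{BS}}(x_1^{*}(\lambda_2),\lambda_2)=P_{\text{max}}$. The naive route fails, because substituting the Lemma \ref{lemma:3} form directly yields $\bar{P}_{\text{BS}}(x_1^{*}(\lambda),\lambda)=\mu/D_3+P_c$, which is independent of $\lambda$ and so cannot locate a finite crossover with $P_{\text{max}}$. The remedy is to use that at $\lambda_2$ the unconstrained optimum just reaches the power limit, so $x_1^{*}(\lambda_2)=x_2^{*}(\lambda_2)=:x^{*}$, and this common $x^{*}$ satisfies both the full stationarity equation $\partial\bar{P}_t/\partial x=\mu\pi\lambda$, retaining the polynomial part $\frac{\alpha}{2x}\bar{P}_t$ that the HSE form of $x_1^{*}$ discards, and the active constraint $\bar{P}_t(x^{*})=P^t_{\text{max}}$. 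Dividing these two relations cancels the common factor $D_1x^{\alpha/2}e^{D_3\pi\lambda x}$ and leaves $\frac{\alpha}{2}+D_3\pi\lambda_2 x^{*}=\mu\pi\lambda_2 x^{*}/P^t_{\text{max}}$, hence $\pi\lambda_2 x^{*}=\frac{\alpha P^t_{\text{max}}}{2(\mu-D_3P^t_{\text{max}})}$. Recognizing that $\frac{2D_3\pi\lambda_2 x^{*}}{\alpha}=\mathcal{W}(A_1\lambda_2)$ gives $\mathcal{W}(A_1\lambda_2)=\frac{D_3P^t_{\text{max}}}{\mu-D_3P^t_{\text{max}}}$, and inverting via $\lambda_2=\frac{1}{A_1}\mathcal{W}(A_1\lambda_2)e^{\mathcal{W}(A_1\lambda_2)}$ produces (\ref{eq:approximate lambda2}), including the $(\mu-D_3P^t_{\text{max}})$ denominator. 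The point I would flag explicitly is that this step deliberately keeps the $\frac{\alpha}{2}$ term: although negligible for the value of $x_1^{*}$ itself under the HSE assumption, it is precisely what endows $\bar{P}_{\text{BS}}(x_1^{*}(\lambda),\lambda)$ with its $\lambda$-dependence (consistent with the monotonicity used to define $\lambda_2$ above Theorem \ref{theorem:1}) and thus pins down the finite crossover.
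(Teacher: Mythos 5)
Your derivation is correct and reproduces all three stated expressions, following essentially the route the paper intends (it omits this proof as ``similar to that of Lemma \ref{lemma:3}''): substitute the closed forms of $x_1^{*}(\lambda)$ and $x_2^{*}(\lambda)$ into each defining condition ($L_{\lambda}=0$ or $\bar{P}_{\text{BS}}=P_{\text{max}}$), use $\pi\lambda x_j^{*}(\lambda)=\frac{\alpha}{2D_3}\mathcal{W}(A_j\lambda)$ to reduce each to a scalar equation in $\mathcal{W}(\cdot)$, and invert via $\mathcal{W}(y)e^{\mathcal{W}(y)}=y$. Your explicit treatment of $\lambda_2$ --- retaining the $\frac{\alpha}{2}$ term of the stationarity equation (\ref{eq:expanded for x1}), since the fully simplified HSE form makes $\bar{P}_{\text{BS}}(x_1^{*}(\lambda),\lambda)=\mu/D_3+P_c$ constant in $\lambda$ and hence unable to locate the crossover --- is precisely the step the paper leaves implicit, and it yields $\mathcal{W}(A_1\lambda_2)=\frac{D_3P^t_{\text{max}}}{\mu-D_3P^t_{\text{max}}}$ and hence exactly (\ref{eq:approximate lambda2}).
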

\begin{proof}
The proof is similar to that of Lemma \ref{lemma:3}, and thus omitted for brevity.
\end{proof}

\begin{remark}\label{remark:1}
$\lambda_1$, $\lambda_2$ and $\lambda_3$ in Corollary \ref{corollary:1} can be verified to be all strictly decreasing functions of the dual variable $\mu$ as follows. Let $\mu^{*}$ be the optimal dual solution of Problem (P1), $\lambda^{*}_1$, $\lambda^{*}_2$ and $\lambda^{*}_3$ be the corresponding critical values of $\lambda$ when $\mu = \mu^{*}$. Since $\mu^{*}$ strictly increases as the throughput constraint $U_{\text{avg}}$ increases, it follows from (\ref{eq:approximate lambda1})-(\ref{eq:approximate lambda3}) that $\lambda^{*}_1$, $\lambda^{*}_2$ and $\lambda^{*}_3$ are all strictly decreasing functions of $U_{\text{avg}}$. Since in Theorem \ref{theorem:1}, $\lambda_1$ and $\lambda_3$ are the thresholds of the MU density above which BS switches from off to on mode, their decrease with increasing $U_{\text{avg}}$ implies that BS needs to be stay on for more time if large system throughput is required.
\end{remark}

\section{Suboptimal schemes}\label{sec:suboptimal}
The optimal power and range adaptation policy presented in Section \ref{sec:optimal power and range adaptation} combines cell range adaptation and BS LTPC (including on-off control), suggesting that the energy saving at BS essentially comes from two major energy saving mechanisms (ESMs): range adaptation and BS on-off control. In this section, we propose four low-complexity suboptimal schemes, which can be considered as suboptimal solutions of (P0) with various combinations of these two ESMs, to investigate their effects on the system energy consumption.
\begin{enumerate}
\item {\bf Fixed range with BS on-off control (FRw/OFC)}: In this scheme, BS is switched off when MU density is lower than a cutoff value $\lambda_{c}$, while the coverage range $R$ is fixed as $R_{f}$ whenever BS is on. For a given $\lambda_c$, since from (\ref{eq:power radium relationship}) the BS transmission power is a strictly increasing function of $R$, $R_f$ should be chosen as the minimum value, denoted by $R_{f}(\lambda_c)$, to satisfy the throughput constraint $U_{\text{avg}}$ by applying BS power control with fixed coverage based on $\lambda$ according to (\ref{eq:power radium relationship}). Furthermore, $\lambda_c$ should be optimized to minimize the average BS power (including both transmission and non-transmission related portions) consumption. The optimal cutoff value $\lambda^{*}_c$ and its corresponding coverage range $R_f(\lambda^{*}_c)$ can be found via solving Problem (P0) by assuming the following (suboptimal) range adaptation policy:
    \begin{align}\label{eq:FRw/oFC constraint}
        R(\lambda) & = \left\{ \begin{array}{cl} \displaystyle
        R_f(\lambda_c) & \mbox{if } \lambda \geq \lambda_c  \\
        0 & \mbox{otherwise}. \end{array}\right.
    \end{align}
    Specifically, we have
    \begin{align}\label{eq:find optimal lambda_c for FRwOFC}
    \lambda^{*}_c =\arg\mathop{\mathtt{min.}}\limits_{\lambda_c < \lambda_{\text{max}}} \mathbb{E}_{\lambda_c}\left[\bar{P}_{\text{BS}}(R_f(\lambda_c), \lambda)\right]
    \end{align}
    where
    \begin{align}\label{eq:find fixed range}
    R_f(\lambda_c) = \mathop{\mathtt{min.}} & ~ R_f \\
    \mathtt{s.t.} & ~ \mathbb{E}_{\lambda_c}\left[U(R_f, \lambda)\right] \geq U_{\text{avg}} \nonumber \\
    & ~ \bar{P}_{\text{BS}}(R_f, \lambda) \leq P_{\text{max}}, \forall \lambda \geq \lambda_c. \nonumber
    \end{align}
    where $\mathbb{E}_{\lambda_c}\left[f(\lambda)\right] \triangleq \mathbb{E}_{\lambda}\left[\left.f(\lambda) \right|\lambda \geq \lambda_c\right]\mbox{Pr}\left\{\lambda \geq \lambda_c\right\}$. For a given $\lambda_c$, since $\mathbb{E}_{\lambda_c}\left[U(R_f, \lambda)\right]$ is a strictly increasing function of $R_f$, Problem (\ref{eq:find fixed range}) can be solved efficiently through the bisection search. Then, the optimal cut-off threshold in (\ref{eq:find optimal lambda_c for FRwOFC}) can be found by a line search over $[0, \lambda_{\text{max}}]$.
\item {\bf Fixed range without BS on-off control (FRw/oOFC)}: In this scheme, BS is not allowed to be switched off during operation. The coverage range is fixed as $R_{f}$, which is chosen as the minimum value of $R$ to satisfy the throughput constraint $U_{\text{avg}}$ by applying BS power control only based on $\lambda$ according to (\ref{eq:power radium relationship}). Note that FRw/oOFC can be treated as a special case of FRw/OFC with $\lambda_c$ in (\ref{eq:FRw/oFC constraint}) set to be $0$. Thus, the fixed coverage $R_f$ can be directly determined by solving Problem (\ref{eq:find fixed range}) with $\lambda_c = 0$.
\item {\bf Adaptive range with BS on-off control (ARw/OFC)}: In this scheme, BS is switched off when MU density is lower than a cutoff value $\lambda_c$, while BS transmits with constant power $P_{f} - P_c$ whenever it is powered on by applying range adaptation only based on $\lambda$ according to (\ref{eq:power radium relationship}). Given $P_{f}$, the corresponding $\lambda_c$ is chosen as the maximum value of $\lambda$, denoted by $\lambda_c(P_f)$, to satisfy the throughput constraint $U_{\text{avg}}$, in order to minimize the BS average power consumption $\mathbb{E}_{\lambda_c(P_f)}\left[P_{f}\right]$; $P_{f}$ is then optimized to further minimize the average power consumption at BS. The optimal transmit power $P^{*}_{f} - P_c$ and its corresponding cutoff value $\lambda_c(P^{*}_{f})$ can be obtained via solving Problem (P0) by assuming the following (suboptimal) range adaptation policy:
    \begin{align}\label{eq:ARw/oFC constraint}
        R(\lambda) & = \left\{ \begin{array}{cl} \displaystyle
        \bar{P}^{-1}_{\text{BS}}(P_f, \lambda) & \mbox{if } \lambda \geq \lambda_c(P_f)  \\
        0 & \mbox{otherwise}, \end{array}\right.
    \end{align}
    where $\bar{P}^{-1}_{\text{BS}}(P_f, \lambda)$ is the inverse function of (\ref{eq:BS power consumption}) which computes the coverage range with given BS power consumption $P_f$ and MU density $\lambda$. Specifically, we have
    \begin{align}\label{eq:find optimal power for ARw/OFC}
    P^{*}_{f} = \arg\mathop{\mathtt{min.}}\limits_{P_{f} \leq P_{\text{max}}}  \mathbb{E}_{\lambda_c(P_f)}\left[P_{f}\right]
    \end{align}
    where
    \begin{align}\label{eq:ARw/FTP}
    \lambda_c(P_{f}) = \mathop{\mathtt{max.}} & ~ \lambda_c \\
    \mathtt{s.t.} & ~ \mathbb{E}_{\lambda_c}\left[U(R(\lambda), \lambda)\right] \geq U_{\text{avg}} \nonumber \\
    & ~ \bar{P}_{\text{BS}}(R(\lambda), \lambda) = P_{f}, \forall \lambda \geq \lambda_c. \nonumber
    \end{align}
    Note that from (\ref{eq:ARw/oFC constraint}) and Remark \ref{remark:0}, $R(\lambda)$ increases strictly with $P_f$ given $\lambda$, $U(R(\lambda), \lambda) = \pi\lambda R^2(\lambda)$ is thus a strictly increasing function of $P_f$. Therefore, Problem (\ref{eq:ARw/FTP}) can be solved efficiently through the bisection search. Then, the optimal constant BS power consumption in (\ref{eq:find optimal power for ARw/OFC}) can be found by a line search over $[0, P_{\text{max}}]$.
\item {\bf Adaptive range without BS on-off control (ARw/oOFC)}: In this scheme, BS transmits with constant power $P_{f} - P_c$ and is not allowed to be switched off during operation, i.e., no BS power control is applied. The constant transmit power $P_{f} - P_c$ is chosen as the minimum value to satisfy the throughput constraint $U_{\text{avg}}$ by applying range adaptation only based on $\lambda$ according to (\ref{eq:power radium relationship}). Note that ARw/oOFC is a special case of ARw/OFC with $\lambda_c$ in (\ref{eq:ARw/oFC constraint}) set to be $0$. Thus, $P_f$ can be obtained by solving Problem (\ref{eq:find optimal power for ARw/OFC}) with $\lambda_c = 0$.
\end{enumerate}

The suboptimal schemes presented above all yield feasible and in general suboptimal solutions of Problem (P0). In particular, FRw/OFC and ARw/oOFC apply only BS power control (including  on-off control) and only range adaptation, respectively; ARw/OFC applies both BS on-off control and range adaptation, while FRw/oOFC does not apply any of them for lowest complexity. By comparing the performance of these suboptimal schemes with the optimal scheme presented in Section \ref{sec:optimal power and range adaptation}, we can investigate the effect of each individual ESM, namely, BS
power control and range adaptation on the BS energy saving, as will be shown in the next section through numerical examples.

\section{Numerical Results}\label{sec:performance comparison}
To obtain numerical results, we assume a time-varying traffic density with PDF: $f(\lambda) = \frac{4\lambda}{\lambda^2_{\text{max}}}, ~ 0 \leq \lambda \leq \frac{\lambda_{\text{max}}}{2}$; $f(\lambda) =\frac{4}{\lambda_{\text{max}}} - \frac{4\lambda}{\lambda^2_{\text{max}}}, ~ \frac{\lambda_{\text{max}}}{2} < \lambda \leq \lambda_{\text{max}}$, where $\lambda_{\text{max}} = 1\times10^{-4}$ MUs/$\mbox{m}^2$ is the peak traffic load. We consider pathloss and Rayleigh fading for channels between BS and MUs, where the pathloss exponent $\alpha$ is 3 and the outage probability threshold $\bar{\text{P}}_{\text{out}}$ is $10^{-3}$. The bandwidth $W$ and the rate requirement $\bar{v}$ of each MU are set to be 5 MHz and $150$ kbits/sec, respectively, if not specified otherwise. We also set a short-term power constraint at BS as $P_{\text{max}} = 160$ W. Other parameters are set as $\Gamma = 1$, $N_0=-174$ dBm/Hz, $r_0 = 10$ m, and $K= -60$ dB.

\begin{figure}
\centering
\epsfxsize=0.7\linewidth
    \includegraphics[width=8.8cm]{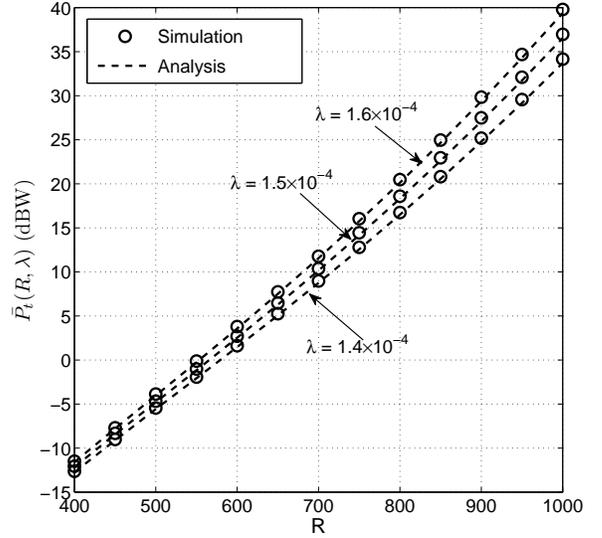}
\caption{Average transmit power $\bar{P_t}(R,\lambda)$ in Theorem \ref{theorem:0}.} \label{fig:power scaling law}
\end{figure}

Fig. \ref{fig:power scaling law} verifies the power scaling law in Theorem \ref{theorem:0}. For a given MU density $\lambda$, it is observed that the simulation results match well with our analytical result in (\ref{eq:power radium relationship}).

\begin{figure}
\centering
\subfigure[]{
\epsfxsize=0.7\linewidth
\includegraphics[width=8.8cm]{./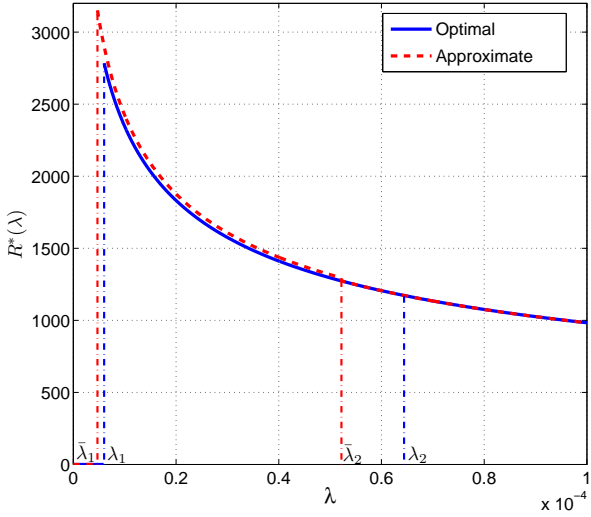}
\label{fig:optimaldemofirstRadius}
}
\subfigure[]{
\epsfxsize=0.7\linewidth
\includegraphics[width=8.8cm]{./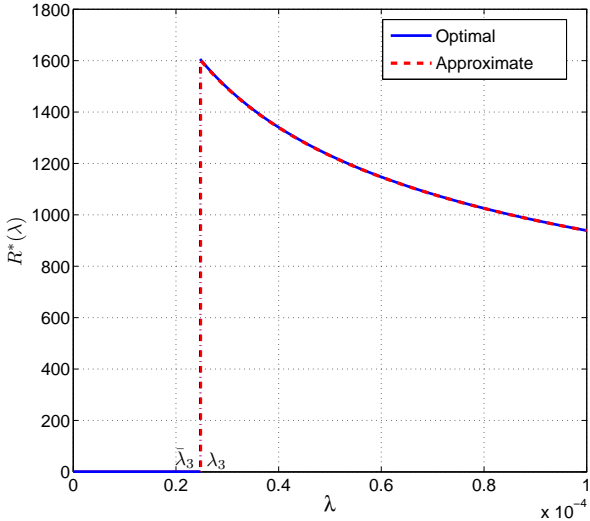}
\label{fig:optimaldemosecondRadius}
}
\caption[]{Optimal and approximate cell range adaptation v.s. MU density: (a) $\lambda_2 \geq \lambda_1$; (b) $\lambda_2 < \lambda_1$.}
\label{fig:Radius}
\end{figure}

\begin{figure}
\centering
\subfigure[]{
\epsfxsize=0.7\linewidth
\includegraphics[width=8.8cm]{./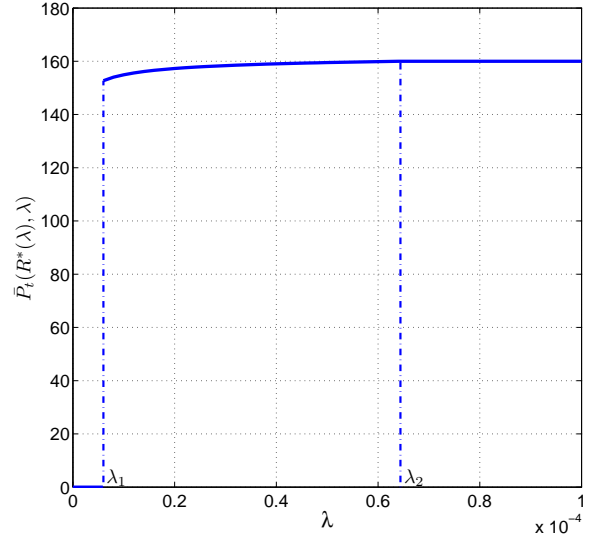}
\label{fig:optimaldemofirstPower}
}
\subfigure[]{
\epsfxsize=0.7\linewidth
\includegraphics[width=8.8cm]{./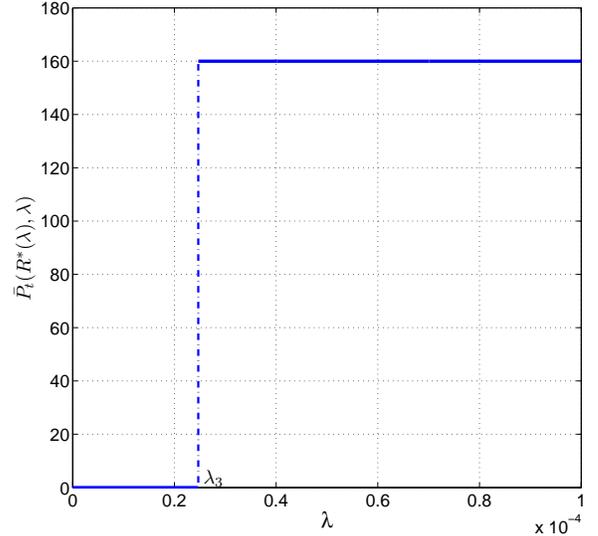}
\label{fig:optimaldemosecondPower}
}
\caption[]{Optimal BS power control v.s. MU density: (a) $\lambda_2 \geq \lambda_1$; (b) $\lambda_2 < \lambda_1$.}
\label{fig:Power}
\end{figure}

\begin{figure}
\centering
\subfigure[]{
\epsfxsize=0.7\linewidth
\includegraphics[width=8.8cm]{./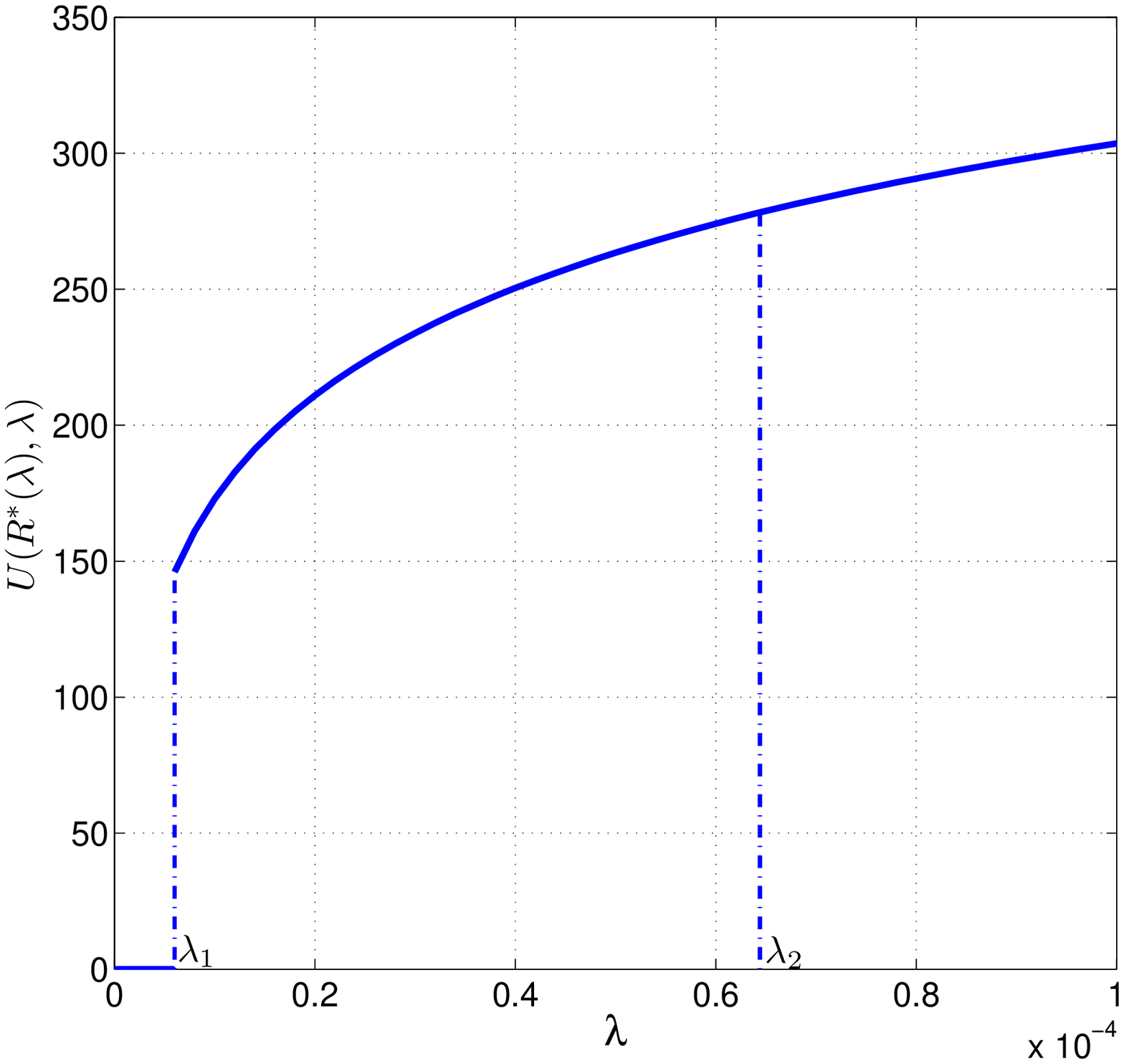}
\label{fig:optimaldemofirstUser}
}
\subfigure[]{
\epsfxsize=0.7\linewidth
\includegraphics[width=8.8cm]{./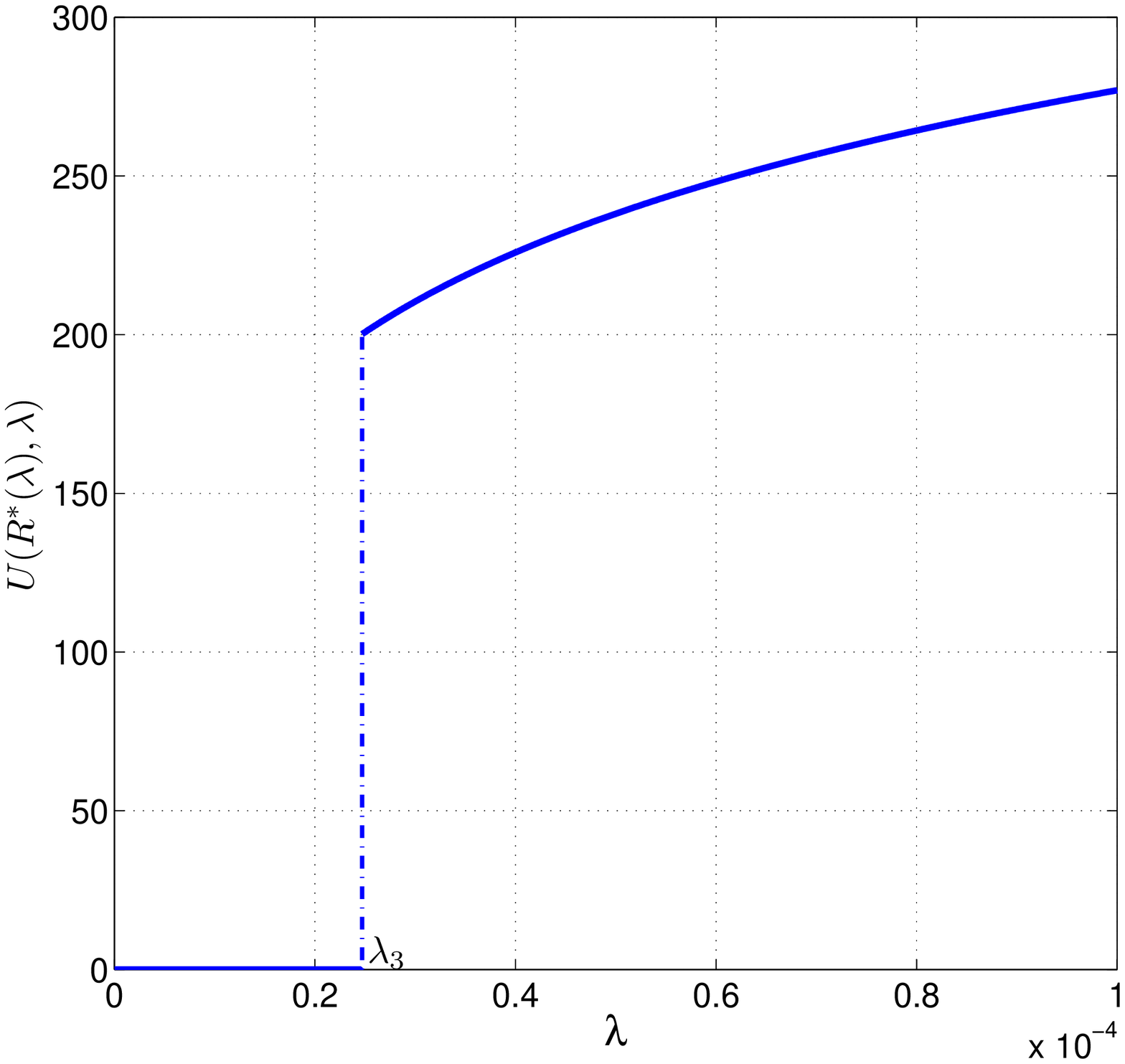}
\label{fig:optimaldemosecondUser}
}
\caption[]{Average number of supported users v.s. MU density: (a) $\lambda_2 \geq \lambda_1$; (b) $\lambda_2 < \lambda_1$.}
\label{fig:User}
\end{figure}

Fig. \ref{fig:optimaldemofirstRadius} and Fig. \ref{fig:optimaldemosecondRadius} show the optimal range adaptation in Theorem \ref{theorem:1} and the approximate range adaptation in Lemma \ref{lemma:3} under the HSE assumption as functions of MU density, i.e., $R^{*}(\lambda) = \sqrt{x^{*}(\lambda)}$, for the two cases of $\lambda_2 \geq \lambda_1$ and $\lambda_2 < \lambda_1$, respectively. Fig. \ref{fig:Power} and Fig. \ref{fig:User} show the corresponding optimal BS power adaptation and the resulting system throughput (in terms of average number of supported MUs), respectively\footnote{Since the results by the approximate range adaptation are almost no different from those in Figs. \ref{fig:Power} and \ref{fig:User}, we do not show them in these two figures for brevity.}. For Fig. \ref{fig:optimaldemofirstRadius}, Fig. \ref{fig:optimaldemofirstPower} and Fig. \ref{fig:optimaldemofirstUser}, it is assumed that $P_c = 120$ W and the corresponding optimal dual solution for Problem (P1) is $\mu^{*} = 1.05$, with which it can be verified that $\lambda_2 > \lambda_1$, i.e., corresponding to the first case in Theorem \ref{theorem:1}. For Fig. \ref{fig:optimaldemosecondRadius}, Fig. \ref{fig:optimaldemosecondPower} and Fig. \ref{fig:optimaldemosecondUser}, it is assumed that $P_c = 140$ W and $\mu^{*} = 0.8$; thus the critical values of $\lambda$ satisfy $\lambda_3 > \lambda_1 > \lambda_2$, which is in accordance with the second case of Theorem \ref{theorem:1}. It is observed that the numerical examples validate our theoretical results. As shown in Fig. \ref{fig:Radius}, a cut-off value of $\lambda$ exists (note that $\bar{\lambda}_i, i=1,2,3$, represent the approximate critical values of $\lambda$ obtained by Corollary \ref{corollary:1}) in either of the two cases of Theorem \ref{theorem:1}, which implies that allowing BS to be switched off under light load is optimal for energy saving. Note that from Fig. \ref{fig:Radius}, the approximate range adaptation is observed to match well with the optimal range adaptation for both cases. Fig. \ref{fig:Power} shows the optimal BS power adaptation versus the MU density. It is observed that once the BS is on, it transmits near or at the maximum power budget, which implies that constant power transmission at ``on'' mode is near or even optimal. This also explains the observation in Fig. \ref{fig:optimaldemofirstRadius} that the deviation of the approximated value of $\lambda_2$ or $\bar{\lambda}_2$ from $\lambda_2$ does not affect the accuracy of the approximate range adaptation policy, since the accuracy of $\lambda_1$ and $\lambda_3$ that control BS's on-off behavior is more crucial. The variations of the system throughput $U(R^{*}(\lambda),\lambda)$ with MU density $\lambda$ under the optimal scheme is shown in Fig. \ref{fig:User}. As discussed in Corollary \ref{corollary:0}, $U(R^{*}(\lambda),\lambda)$ is observed to increase strictly with $\lambda$ indicating that the optimal adaptation scheme takes advantage of higher MU density to maximize the system throughput.

\begin{figure}
\centering
\epsfxsize=0.7\linewidth
    \includegraphics[width=8.8cm]{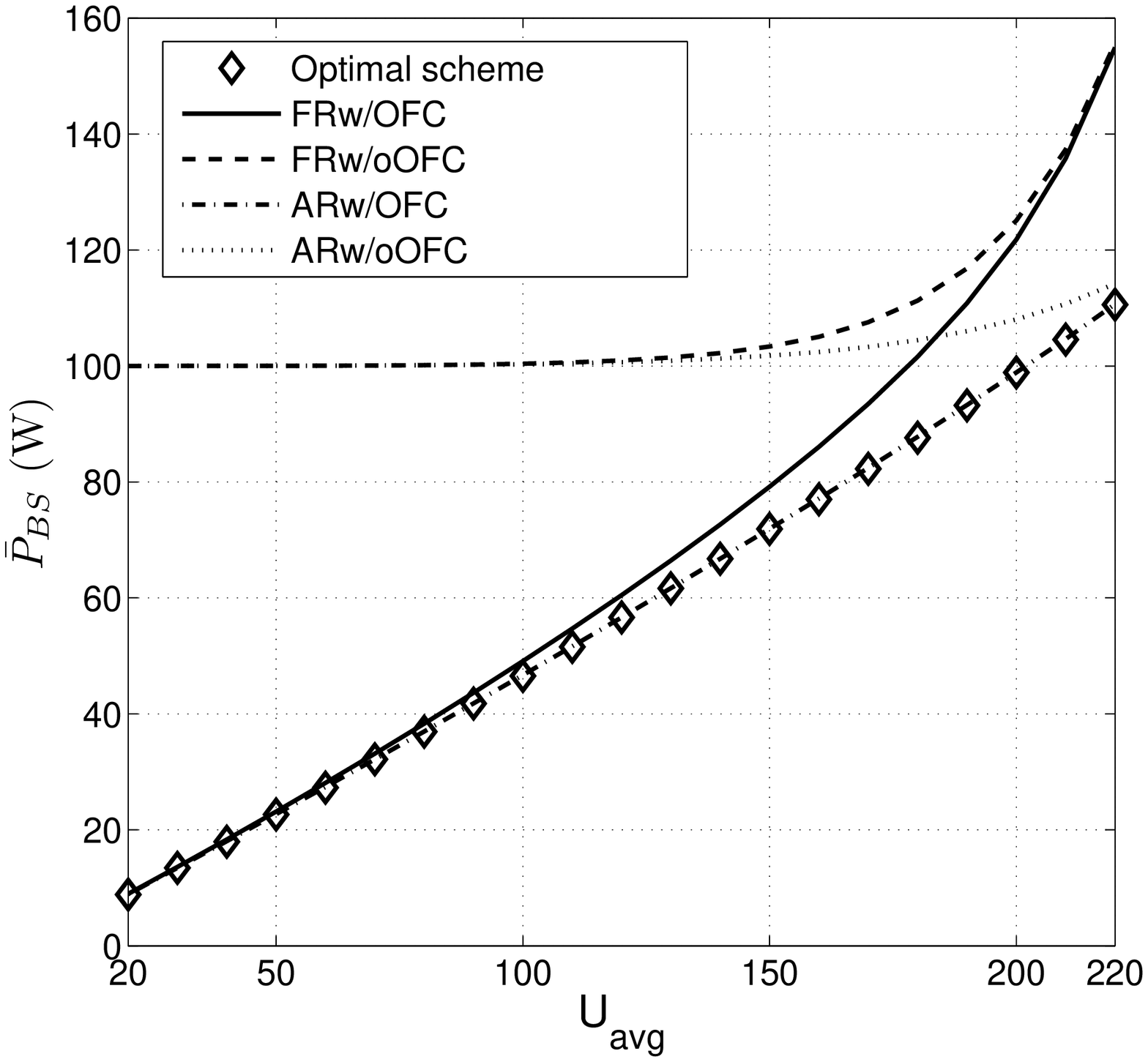}
\caption{Performance comparison with $P_c = 60$ W and $\bar{v} = 150$ Kbps} \label{fig:5_100_150_symmetric}
\end{figure}

Next, we compare the suboptimal schemes in Section \ref{sec:suboptimal} with the optimal scheme. With $P_c =60$ W, Fig. \ref{fig:5_100_150_symmetric} shows the average power consumption $\bar{P}_{\text{BS}}$ at BS versus the system throughput $U_{\text{avg}}$. From Fig. \ref{fig:5_100_150_symmetric}, we observe that ARw/OFC performs almost the same as the optimal scheme over the entire range of values of $U_{\text{avg}}$. This is because that constant power transmission at BS ``on'' mode is near or even optimal (c.f. Fig. \ref{fig:optimaldemosecondPower}) and ARw/OFC differs from the optimal scheme only in that the (long-term) transmit power control when BS is on (c.f. Fig. \ref{fig:optimaldemofirstPower} with $\lambda_1 < \lambda < \lambda_2$) is not implemented. It is also observed that when $U_{\text{avg}}$ is small, FRw/OFC has similar energy consumption as the optimal scheme and ARw/OFC; however, their performance gap is enlarged as $U_{\text{avg}}$ increases. A similar observation can be made by comparing ARw/oOFC and FRw/oOFC. From these observations, it follows that BS on-off control is the most effective ESM when the network throughput is low, while range adaptation plays a more important role when the network throughput becomes higher. Finally, we observe that ARw/OFC and FRw/OFC converge to ARw/oOFC and FRw/oOFC, respectively, as $U_{\text{avg}}$ increases. This is because that to achieve higher network throughput, BS needs to be ``on'' for more time to support larger number of MUs; as a result, BS on-off control is less useful for energy saving.

\begin{figure}
\centering
\epsfxsize=0.7\linewidth
    \includegraphics[width=8.8cm]{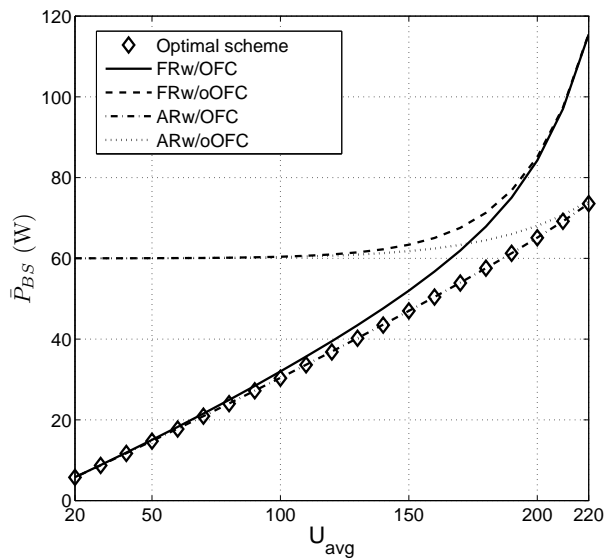}
\caption{Performance comparison with $P_c = 100$ W and $\bar{v} = 150$ Kbps} \label{fig:5_60_150_symmetric}
\end{figure}

In Fig. \ref{fig:5_60_150_symmetric}, we set $P_c = 100$ W to further evaluate the performances of different schemes under a higher non-transmission related power consumption at BS. Similar observations can be made from Fig. \ref{fig:5_60_150_symmetric} as in Fig. \ref{fig:5_100_150_symmetric}. However, it is worth noticing that BS on-off control plays a more dominant role for energy saving when $U_{\text{avg}}$ is small, since a higher $P_c$ is required. It is also interesting to observe that the performance gaps among different schemes with and without range adaptation are almost invariant to the change of $P_c$ at high network throughput, which is around $45$ W in both Figs. \ref{fig:5_100_150_symmetric} and \ref{fig:5_60_150_symmetric} with $U_{\text{avg}} = 220$. In Fig. \ref{fig:5_100_500_symmetric}, $P_c$ is reset as $60$ W but the transmission rate for each MU $\bar{v}$ is increased to $500$ kbits/sec to model the case with high-rate multimedia traffic. The simulation result shows that the convergence between different schemes with and without BS on-off control is much faster, which implies that range adaptation becomes more effective.

\begin{figure}
\centering
\epsfxsize=0.7\linewidth
    \includegraphics[width=8.8cm]{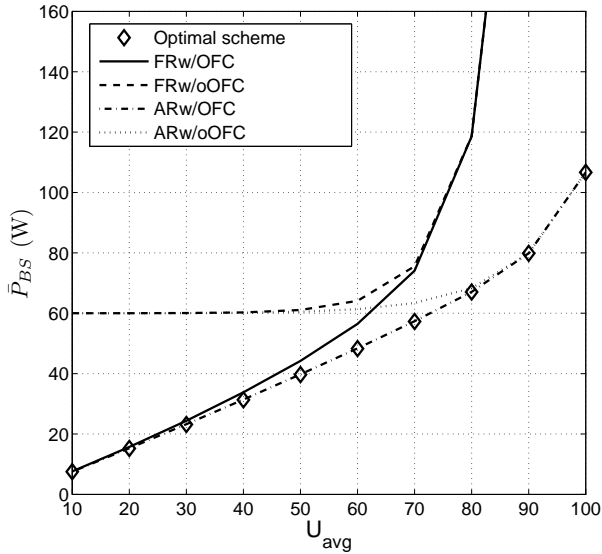}
\caption{Performance comparison with $P_c = 60$ W and $\bar{v} = 500$ Kbps} \label{fig:5_100_500_symmetric}
\end{figure}

To summarize, we draw the following key conclusions on the effects of different ESMs on the BS energy saving performance:
\begin{itemize}
\item BS on-off control is the most effective ESM when the network throughput is not high;
\item Cell range adaptation plays a more important role in BS energy saving when the network throughput is higher;
\item Finer-grained transmit power control at BS does not introduce significant benefit, i.e. constant power transmission at BS ``on'' mode is practically optimal.
\end{itemize}

\section{Conclusion}\label{sec:conclusion}
In this paper, under an OFDMA-based broadcast channel setup, we investigate optimal power and range adaptation polices with time-varying traffic to minimize the BS average power consumption subject to the throughput and QoS constraints. A new power scaling law that relates the (short-term) average transmit power at BS with the given cell range and MU density is derived, based on which we obtain the optimal power and range adaptation policy by solving a joint cell range adaptation and (long-term) power control problem. By exploiting the fact that energy saving at BS essentially comes from two major mechanisms, namely BS on-off power control and range adaptation, suboptimal schemes are proposed to achieve efficient performance-complexity tradeoffs. It is shown by simulation results that when the network throughput is modest, BS on-off power control is the most effective energy saving mechanism, while when the network throughput is higher, range adaptation becomes more effective. The results of this paper provide a preliminary unified framework for evaluating the performance of existing cell adaptation schemes such as BS's on-off switching and cell zooming, and for designing cell adaptation strategies for optimal energy saving.

{In this paper, we focus on the extreme case of a one-cell system for the purpose of obtaining useful insights, which needs to be extended to the more practical multi-cell scenario. It is thus interesting as well as important to investigate the optimal cell adaptation policy in a cooperative multi-cell setup by balancing between the cellular network energy consumption and its coverage performance by extending the mathematical framework developed in this paper.}

\appendices
\section{Proof of Theorem \ref{theorem:0}}\label{appendix:proof theorem 0}
First, $\mathbb{E}[P_i(r_i, n)]$ is computed based on (\ref{eq:pdf of location}) as follows, where $P_i(r_i, n)$ is given by (\ref{eq:single user transmission power}) with $N$ replaced by $n$.
\begin{align}\label{eq:single user expected transmission power}
\mathbb{E}[P_i(r_i, n)] = \frac{2\Gamma N_0W(2^{nC_2} - 1)}{KC_1(\alpha + 2)r_0^{\alpha}n}\left(R^{\alpha} + \frac{\alpha r_0^{\alpha + 2}}{2R^2}\right).
\end{align}
Since $\mathbb{E}[P_i(r_i, n)]$ is identical for all $i$'s, according to (\ref{eq:conditioned expected transmision power formular}), $\mathbb{E}[P_t|N]$ can be simply obtained through multiplying $\mathbb{E}[P_i(r_i, n)]$ by the number of MUs $n$, i.e.
\begin{align}\label{eq:conditioned expected transmision power}
\mathbb{E}[P_t|N] & = n\mathbb{E}[P_i(r_i, n)] \nonumber \\
& = \frac{2\Gamma N_0W(2^{nC_2} - 1)}{KC_1(\alpha + 2)r_0^{\alpha}}\left(R^{\alpha} + \frac{\alpha r_0^{\alpha + 2}}{2R^2}\right).
\end{align}
Averaging (\ref{eq:conditioned expected transmision power}) over the Poisson distribution of $N$, we finally obtain $\bar{P_t}$ as
\begin{align}
\bar{P_t}
& = \sum_{n = 0}^{\infty} \frac{2\Gamma N_0W(2^{nC_2} - 1)}{KC_1(\alpha + 2)r_0^{\alpha}}\left(R^{\alpha} + \frac{\alpha r_0^{\alpha + 2}}{2R^2}\right)\frac{\mu_N^n}{n!}e^{-\mu_N} \label{eq:total expected power through averaging over N} \\
& = D_1\left(R^{\alpha} + \frac{\alpha r_0^{\alpha + 2}}{2R^2}\right)\left(\sum_{n = 0}^{\infty}\frac{(\mu_N2^{C_2})^n}{n!}e^{-\mu_N} - 1\right) \\
& = D_1\left(R^{\alpha} + \frac{\alpha r_0^{\alpha + 2}}{2R^2}\right)\left(e^{D^{'}_1\pi\lambda R^2} - 1\right) \label{eq:before ignore r0}\\
& \approx D_1R^{\alpha}\left(e^{D^{'}_1\pi\lambda R^2} - 1\right)\label{eq:total power no approximaion}
\end{align}
where $D_1 = \frac{2\Gamma N_0W}{KC_1(\alpha + 2)r_0^{\alpha}}$ and $D^{'}_1 = 2^{\frac{\bar{v}}{W}} - 1$. Note that since cell radius $R$ is practically much larger than the reference distance $r_0$, we have ignored the term $\frac{\alpha r_0^{\alpha + 2}}{2R^2}$ in (\ref{eq:before ignore r0}).

It is worth noting that
\begin{align}
D^{'}_1  = (2^{\frac{\bar{v}}{W}}-1) = (2^{\frac{r_{se}}{\bar{N}}}-1)
\end{align}
where $r_{se}$ is the system spectrum efficiency in bps/Hz and $\bar{N}$ is the nominal number of supported users, both of which are pre-designed system parameters. In practice, $r_{se} = 2 \sim 6$ bps/Hz and $\bar{N}$ is a couple of hundreds and even thousands. Therefore, $\frac{r_{se}}{\bar{N}}$ is generally a very small number such that
\begin{align}
D^{'}_1 \approx \frac{\bar{v}}{W}\ln2.
\end{align}
Thus, (\ref{eq:total power no approximaion}) can be further simplified as
\begin{align}
\bar{P_t} \approx D_1R^{\alpha}\left(2^{D_2\pi\lambda R^2} - 1\right)
\end{align}
where $D_2 = \frac{\bar{v}}{W}$. Theorem \ref{theorem:0} is thus proved.

\section{Proof of Lemma \ref{lemma:0}}\label{appendix:proof lemma 0}
To prove Lemma \ref{lemma:0}, the following two facts are first verified:
\begin{enumerate}
\item For any $P_c$, which yields feasible (P0), there always exist some $\lambda$ such that $L_{\lambda}(x_1^{*}(\lambda), \mu) < 0$;
\item If $L_{\lambda}(x_1^{*}(\lambda_a), \mu) \leq 0$, then $L_{\lambda}(x_1^{*}(\lambda_b), \mu) < 0$ for all $\lambda_b > \lambda_a$.
\end{enumerate}

The first fact can be shown by contradiction as follows. Suppose that $L_{\lambda}(x_1^{*}(\lambda), \mu)$ is always non-negative, i.e.
\begin{align}
    L_{\lambda}(x_1^{*}(\lambda), \mu) \geq 0, ~~~~ \forall x > 0, \lambda \geq 0.
\end{align}
Then, according to (\ref{eq:original solution}) we have
\begin{align}
x^{*}(\lambda) = 0, ~~~~ \forall \lambda \geq 0
\end{align}
which violates the throughput constraint $\mathbb{E}_{\lambda}\left[U(x(\lambda), \lambda)\right] \geq U_{\text{avg}}$. The first fact is thus proved.

Next, we verify the second fact. According to the first fact, there always exists a $\lambda$ such that $L_{\lambda}(x_1^{*}(\lambda), \mu) < 0$. Therefore, without loss of generality, we can assume $L_{\lambda}(x_1^{*}(\lambda_a), \mu) \leq 0$, i.e.
\begin{align}
\min\limits_{x(\lambda_a) > 0} \bar{P}_{\text{BS}}(x(\lambda_a), \lambda_a) - \mu U(x(\lambda_a), \lambda_a) \leq 0.
\end{align}
Then there exists at least one $x_a(\lambda_a) > 0$ such that
\begin{align}
\bar{P}_{\text{BS}}(x_a(\lambda_a), \lambda_a) - \mu U(x_a(\lambda_a), \lambda_a) \leq 0
\end{align}
or equivalently,
\begin{align}
D_1x_a(\lambda_a)^{\frac{\alpha}{2}}\left(2^{D_2\pi\lambda_a x_a(\lambda_a)}-1\right) + P_{c} \leq \mu\pi\lambda_ax_a(\lambda_a).
\end{align}
For any given $\lambda_b > \lambda_a$, by letting $x_b(\lambda_b) = x_a(\lambda_a)\frac{\lambda_a}{\lambda_b}$, then
\begin{align}
&D_1x_b(\lambda_b)^{\frac{\alpha}{2}}\left(2^{D_2\pi\lambda_b x_b(\lambda_b)}-1\right) + P_{c} \\
= ~ &D_1x_b(\lambda_b)^{\frac{\alpha}{2}}\left(2^{D_2\pi\lambda_a x_a(\lambda_a)}-1\right) + P_{c} \\
< ~ &D_1x_a(\lambda_a)^{\frac{\alpha}{2}}\left(2^{D_2\pi\lambda_a x_a(\lambda_a)}-1\right) + P_{c} \\
\leq ~ &\mu\pi\lambda_ax_a(\lambda_a) = \mu\pi\lambda_bx_b(\lambda_b).
\end{align}
Thus for any $\lambda_b > \lambda_a$, we can always find an $x_b(\lambda_b)$ such that $\bar{P}_{\text{BS}}(x_b(\lambda_b), \lambda_b) - \mu U(x_b(\lambda_b), \lambda_b) < 0$, which implies $L_{\lambda}(x_1^{*}(\lambda_b), \mu) < 0$. The second fact is thus proved.

We are now ready to prove Lemma \ref{lemma:0}. The proof is by first showing the fact that $L_{\lambda}(x_1^{*}(\lambda), \mu)$ is positive for sufficiently small $\lambda$'s, and then combining this result with the two facts previously shown.

According to the first-order Taylor expansion, we have
\begin{align}
& D_1x(\lambda)^{\frac{\alpha}{2}}\left(2^{D_2\pi\lambda x(\lambda)}-1\right) + P_{c} \\
> ~ & (\ln 2)D_1D_2\pi\lambda x(\lambda)^{\frac{\alpha+2}{2}} + P_{c}, ~~ \forall x > 0.
\end{align}
Let $h(x(\lambda)) = (\ln 2)D_1D_2\pi\lambda x(\lambda)^{\frac{\alpha+2}{2}} + P_{c} - \mu\pi\lambda x(\lambda)$; then the minimum value of $h(x(\lambda))$ could be easily found by its first-order differentiation, given by
\begin{align}
h(x(\lambda))_{\text{min}} = P_{c} - x_{\text{min}}\lambda\mu\pi\frac{\alpha}{\alpha+2}
\end{align}
where $x_{\text{min}} = \left(\frac{2\mu}{(\alpha+2)(\ln 2)D_1D_2}\right)^{\frac{2}{\alpha}}$. It is easy to verify that if $\lambda < \frac{(\alpha+2)P_{c}}{\alpha\mu\pi x_{\text{min}}}$, $h(x(\lambda))_{\text{min}} > 0$. Since $L_{\lambda}(x(\lambda), \mu)$ is an upper bound of $h(x(\lambda))$, we have
\begin{align}
L_{\lambda}(x(\lambda), \mu) > 0, ~~ \forall x(\lambda) > 0 ~ \text{and} ~ \lambda < \frac{(\alpha+2)P_{c}}{\alpha\mu\pi x_{\text{min}}}
\end{align}
which implies that
\begin{align}\label{eq:small lambda positive}
L_{\lambda}(x_1^{*}(\lambda), \mu) > 0, ~~ \forall \lambda < \frac{(\alpha+2)P_{c}}{\alpha\mu\pi x_{\text{min}}}.
\end{align}
We thus show that $L_{\lambda}(x_1^{*}(\lambda_b), \mu)$ is positive for $\lambda$'s satisfying (\ref{eq:small lambda positive}). With the two facts given earlier, it follows that $L_{\lambda}(x_1^{*}(\lambda), \mu)$ cannot be positive for all $\lambda$'s and $L_{\lambda}(x_1^{*}(\lambda), \mu)$ will remain negative once it turns to be negative for the first time as $\lambda$ increases; thus, we conclude that there must exist a critical value for $\lambda$, i.e., $\lambda_1 > 0$ as given in Lemma \ref{lemma:0}. Lemma \ref{lemma:0} is thus proved.

\section{Proof of Lemma \ref{lemma:1}}\label{appendix:proof lemma 1}
Using the series expansion $2^{x} = \sum\limits_{k=0}^{\infty}\frac{(x(\ln2))^k}{k!}$, (\ref{eq:optimal condition for always on}) is expanded as
\begin{align}\label{eq:expanded optimal conditon for always on}
x_1^{*}(\lambda)^{\frac{\alpha}{2}}\sum_{k=1}^{\infty}\frac{(k+\frac{\alpha}{2})((\ln2)D_2\pi)^k(\lambda x_1^{*}(\lambda))^{k-1}}{k!}  = \frac{\mu\pi}{D_1}.
\end{align}
It can be verified that the left-hand-side (LHS) of (\ref{eq:expanded optimal conditon for always on}) is a strictly increasing function of both $\lambda$ and $x_1^{*}(\lambda)$. Thus, to maintain the equality in (\ref{eq:expanded optimal conditon for always on}), $x_1^{*}(\lambda)$ needs to be decreased when $\lambda$ increases and vice versa.

Since $U(x_1^{*}(\lambda), \lambda) = \pi\lambda x_1^{*}(\lambda)$, checking the monotonicity of $U(x_1^{*}(\lambda), \lambda)$ is equivalent to checking that of $\lambda x_1^{*}(\lambda)$. It is observed that if $\lambda$ increases, decreasing $x_1^{*}(\lambda)$ with $\lambda x_1^{*}(\lambda)$ being a constant will decrease the LHS of (\ref{eq:expanded optimal conditon for always on}) due to the term $x_1^{*}(\lambda)^{\frac{\alpha}{2}}$. Therefore, $\lambda x_1^{*}(\lambda)$ needs to be an increasing function of $\lambda$ and so does $U(x_1^{*}(\lambda), \lambda)$.

To prove the monotonicity of $\bar{P}_{\text{BS}}(x_1^{*}(\lambda), \lambda)$, we expand (\ref{eq:optimal condition for always on}) as
\begin{align}\label{eq:expanded for x1}
& \frac{\alpha}{2}D_1x_1^{*}(\lambda)^{\frac{\alpha-2}{2}}\left(2^{D_2\pi\lambda x_1^{*}(\lambda)}-1\right) \nonumber \\
+~ & (\ln 2)D_1D_2\pi\lambda x_1^{*}(\lambda)^{\frac{\alpha}{2}}2^{D_2\pi\lambda x_1^{*}(\lambda)} = \mu\pi\lambda
\end{align}
which can be rearranged as
\begin{align}
&D_1x_1^{*}(\lambda)^{\frac{\alpha}{2}}\left(2^{D_2\pi\lambda x_1^{*}(\lambda)}-1\right)\frac{\alpha}{2\lambda x_1^{*}(\lambda)} \nonumber \\
+~ & D_1x_1^{*}(\lambda)^{\frac{\alpha}{2}}\left(2^{D_2\pi\lambda x_1^{*}(\lambda)}-1\right)(\ln 2)\pi D_2 \nonumber \\
+~ & (\ln 2)D_1D_2\pi x_1^{*}(\lambda)^{\frac{\alpha}{2}} = \mu\pi
\end{align}
or equivalently,
\begin{align}\label{eq:prove P_bs monotonicity}
&\left(\bar{P}_{\text{BS}}(x_1^{*}(\lambda), \lambda) - P_c\right)\frac{\alpha}{2\lambda x_1^{*}(\lambda)} \nonumber \\
+~ & \left(\bar{P}_{\text{BS}}(x_1^{*}(\lambda), \lambda) - P_c\right)(\ln 2)\pi D_2 \nonumber \\
+~ & (\ln 2)D_1D_2\pi x_1^{*}(\lambda)^{\frac{\alpha}{2}} = \mu\pi.
\end{align}
Suppose that $x_1^{*}(\lambda_1)$ and $x_1^{*}(\lambda_2)$ are the two roots of (\ref{eq:optimal condition for always on}) when $\lambda = \lambda_1$ and $\lambda = \lambda_2$, respectively, where $\lambda_2 > \lambda_1$. Based on the monotonicity of $x_1^{*}(\lambda)$ and $U(x_1^{*}(\lambda), \lambda)$ proved above, we have
\begin{align}
    x_1^{*}(\lambda_1)\lambda_1 & < x_1^{*}(\lambda_2)\lambda_2, \\
    x_1^{*}(\lambda_1)^{\frac{\alpha}{2}} & > x_1^{*}(\lambda_2)^{\frac{\alpha}{2}}.
\end{align}
Due to the equality in (\ref{eq:prove P_bs monotonicity}) for all $\lambda > 0$, we have
\begin{align}
\bar{P}_{\text{BS}}(x_1^{*}(\lambda_1), \lambda_1) < \bar{P}_{\text{BS}}(x_1^{*}(\lambda_2), \lambda_2), \forall \lambda_2 > \lambda_1.
\end{align}
Lemma \ref{lemma:1} is thus proved.

\section{Proof of Theorem \ref{theorem:1}}\label{appendix:proof theorem 1}
First, we consider the case of $\lambda_2 \geq \lambda_1$, in which three subcases are addressed as follows:
    \begin{enumerate}
        \item If $\lambda \leq \lambda_1$, according to the definition of $\lambda_1$ given in Lemma \ref{lemma:0}, $L_{\lambda}(x_1^{*}(\lambda), \mu) \geq 0$ for $\lambda \leq \lambda_1$, which corresponds to the third condition in (\ref{eq:original solution}). Therefore, we have
            \begin{align}
            x^{*}(\lambda) = 0. \nonumber
            \end{align}
        \item If $\lambda_1 < \lambda \leq \lambda_2$, we have $L_{\lambda}(x_1^{*}(\lambda), \mu) < 0$. Since $\bar{P}_{\text{BS}}(x_1^{*}(\lambda_2), \lambda_2) = P_{\text{max}}$ and $\bar{P}_{\text{BS}}(x_1^{*}(\lambda), \lambda_2)$ increases with $\lambda$ from Lemma \ref{lemma:1}, it can be easily verified that $\bar{P}_{\text{BS}}(x_1^{*}(\lambda), \lambda_1) < P_{\text{max}}$ for the assumed range of $\lambda$, which is in accordance with the first condition in (\ref{eq:original solution}). Therefore, we have
            \begin{align}
            x^{*}(\lambda) = x_1^{*}(\lambda). \nonumber
            \end{align}
        \item Otherwise, if $\lambda > \lambda_2 \geq \lambda_1$, similar to the previous subcase, we know that $\bar{P}_{\text{BS}}(x_1^{*}(\lambda), \lambda_1) > P_{\text{max}}$. Next, we need to check the sign of $L_{\lambda}(x_2^{*}(\lambda), \mu) = P_{\text{max}} - \mu\pi\lambda x_2^{*}(\lambda)$. Note that $L_{\lambda}(x_2^{*}(\lambda_2), \mu) = L_{\lambda}(x_1^{*}(\lambda_2), \mu)$, which is non-positive due to $\lambda_2 \geq \lambda_1$. Since $U(x_2^{*}(\lambda), \lambda)$ strictly increases with $\lambda$, $L_{\lambda}(x_2^{*}(\lambda), \mu)$ is thus a strictly decreasing function of $\lambda$. Therefore $L_{\lambda}(x_2^{*}(\lambda), \mu) < 0$ for $\lambda > \lambda_2$, which implies
            \begin{align}
            x^{*}(\lambda) = x_2^{*}(\lambda). \nonumber
            \end{align}
    \end{enumerate}
Second, consider the case of $\lambda_2 < \lambda_1$. It is first verified that $\lambda_3 > \lambda_1 > \lambda_2$ in this case as follows: since $x_1^{*}(\lambda_1)$ minimizes $L_{\lambda}(x(\lambda), \mu)$ when $\lambda = \lambda_1$ to attain a zero value, and $L_{\lambda}(x(\lambda), \mu)$ is strictly convex in $x(\lambda)$, it follows that $L_{\lambda}(x_2^{*}(\lambda_1), \mu) > 0$. Since $L_{\lambda}(x_2^{*}(\lambda_3), \mu) = 0$ and $L_{\lambda}(x_2^{*}(\lambda), \mu)$ is a strictly decreasing function of $\lambda$, we conclude that $\lambda_3 > \lambda_1$. Next, we consider the following three subcases:
    \begin{enumerate}
        \item If $\lambda \leq \lambda_1$, according to Lemma \ref{lemma:0}, it is easy to verify that $L_{\lambda}(x_2^{*}(\lambda), \mu) > L_{\lambda}(x_1^{*}(\lambda), \mu) \geq 0$. Therefore, we have
        \begin{align}
        x^{*}(\lambda) = 0. \nonumber
        \end{align}
        \item If $\lambda_1 < \lambda \leq \lambda_3$, we have $\bar{P}_{\text{BS}}(x_1^{*}(\lambda), \lambda) > P_{\text{max}}$ and $L_{\lambda}(x_2^{*}(\lambda), \mu) \geq 0$, which implies
        \begin{align}
        x^{*}(\lambda) = 0. \nonumber
        \end{align}
        \item Otherwise, if $\lambda > \lambda_3$, we have $\bar{P}_{\text{BS}}(x_1^{*}(\lambda), \lambda) > P_{\text{max}}$ and $L_{\lambda}(x_2^{*}(\lambda), \mu) < 0$, which is in accordance with the second condition in (\ref{eq:original solution}). Therefore, we have
            \begin{align}
            x^{*}(\lambda) = x_2^{*}(\lambda). \nonumber
            \end{align}
    \end{enumerate}
Combining the above two cases, Theorem \ref{theorem:1} is thus proved.

\section{Proof of Lemma \ref{lemma:3}}\label{appendix:proof lemma 3}
From (\ref{eq:power radium relationship}) and (\ref{eq:optimal condition for always on}), we obtain the following equation
\begin{align}\label{eq:equation for x2}
D_1x_2^{*}(\lambda)^{\frac{\alpha}{2}}\left(2^{D_2\pi\lambda x_2^{*}(\lambda)} - 1\right) = P_{\text{max}} - P_c.
\end{align}
With the HSE assumption of $D_2\pi\lambda x_2^{*}(\lambda) \gg 1$, (\ref{eq:equation for x2}) is simplified as
\begin{align}\label{eq:approximated equation for x2}
D_1x_2^{*}(\lambda)^{\frac{\alpha}{2}}2^{D_2\pi\lambda x_2^{*}(\lambda)} = P_{\text{max}} - P_c
\end{align}
which can be rearranged as
\begin{align}\label{eq:rearrange approximated equation for x2}
2^{-\frac{2D_2\pi\lambda}{\alpha}x_2^{*}(\lambda)} = \left(\frac{D_1}{P_{\text{max}}- P_c}\right)^{\frac{2}{\alpha}}x_2^{*}(\lambda).
\end{align}
By utilizing
\begin{align}\label{eq:w formular}
p^{ax+b} = cx+d \Rightarrow x = - \frac{\mathcal{W}\left(-\frac{a\ln p}{c}p^{b - \frac{ad}{c}}\right)}{a\ln p} - \frac{d}{c}
\end{align}
with $p > 0$, $a, c \neq 0$, it is easy to verify that $a = -\frac{2D_2\pi\lambda}{\alpha}$, $b=0$, $c = \left(\frac{D_1}{P_{\text{max}}- P_c}\right)^{\frac{2}{\alpha}}$, $d=0$ and $p=2$ in (\ref{eq:rearrange approximated equation for x2}). Thus, $x_2^{*}(\lambda)$ is given by
\begin{align}
x_2^{*}(\lambda) = \frac{\alpha}{2D_3\pi\lambda}\mathcal{W}\left(\frac{2D_3\pi\lambda}{\alpha}\left(\frac{P_{\text{max}}- P_c}{D_1}\right)^{\frac{2}{\alpha}}\right).
\end{align}

We then proceed to derive the expression of $x_1^{*}(\lambda)$. Note that $x_1^{*}(\lambda)$ is the root of equation (\ref{eq:expanded for x1}), which can be expressed as
\begin{align}\label{eq:approximated equation for x1}
x_1^{*}(\lambda)^{\frac{\alpha-2}{2}}2^{D_2\pi\lambda x_1^{*}(\lambda)}\left[\frac{\alpha}{2} + (\ln2)D_2\pi\lambda x_1^{*}(\lambda)\right] = \frac{\mu\pi\lambda}{D_1}
\end{align}
by applying the HSE assumption of $D_2\pi\lambda x_1^{*}(\lambda) \gg 1$. Furthermore, it is observed that (\ref{eq:approximated equation for x1}) can be simplified as
\begin{align}\label{eq:further approximated equation for x1}
(\ln2)D_1D_2x_1^{*}(\lambda)^{\frac{\alpha}{2}}2^{D_2\pi\lambda x_1^{*}(\lambda)} = \mu
\end{align}
due to the fact that $(\ln2)D_2\pi\lambda x_1^{*}(\lambda) \gg \frac{\alpha}{2}$, where $\alpha = 2 \sim 6$ in practice. Similar to the case for obtaining $x_2^{*}(\lambda)$, $x_1^{*}(\lambda)$ can be solved from (\ref{eq:further approximated equation for x1}) and given by
\begin{align}
x_1^{*}(\lambda) = \frac{\alpha}{2D_3\pi\lambda}\mathcal{W}\left(\frac{2D_3\pi\lambda}{\alpha}\left(\frac{\mu}{D_1D_3}\right)^{\frac{2}{\alpha}}\right).
\end{align}
Lemma \ref{lemma:3} is thus proved.

\end{document}